\newcommand{\object}{\ensuremath{o}\xspace}
\newcommand{\secret}{\ensuremath{s}\xspace}
\newcommand{\objectspace}{\ensuremath{\mathbb{O}}\xspace}
\newcommand{\secretspace}{\ensuremath{\mathbb{S}}\xspace}
\newcommand{\examplespace}{\ensuremath{\secretspace\times\objectspace}\xspace}
\newcommand{\priors}{\ensuremath{\pi}\xspace}
\newcommand{\prior}[1]{\ensuremath{\pi(#1)}\xspace}
\newcommand{\classifier}{\ensuremath{f}\xspace}
\newcommand{\classifierspace}{\ensuremath{\mathcal{F}}\xspace}
\newcommand{\channel}{\ensuremath{\mathcal{C}}\xspace}   %
\newcommand{\bayesrisk}{\ensuremath{R^{*}}\xspace}
\newcommand{\bayesriskcond}{\ensuremath{r^{*}}\xspace}
\newcommand{\riskcond}[1]{\ensuremath{r(#1)}\xspace}
\newcommand{\errorguesspriors}{\ensuremath{R^{\priors}}\xspace}
\newcommand{\distance}{\ensuremath{d}\xspace}
\newcommand{\knn}{\ensuremath{k_n}-NN\xspace}
\DeclareMathOperator*{\argmax}{argmax}
\newcommand{\meleakage}{{{\rm ME}}\xspace}
\newcommand{\bleau}{F-BLEAU\xspace}
\newcommand{\gowalla}{\texttt{Gowalla}\xspace}
\newcommand{\leakwatch}{LeakWatch\xspace}
\newcommand{\leakiest}{leakiEst\xspace}
\DeclareMathAlphabet{\mathcalligra}{T1}{calligra}{m}{n}
\DeclareFontShape{T1}{calligra}{m}{n}{<->s*[2.2]callig15}{}
\newcommand{\riskfunction}{\ensuremath{\ell\xspace}}
\newtheorem{theorem}{Theorem}
\newtheorem{definition}{Definition}
\begin{document}
\title{F-BLEAU: Fast Black-box Leakage Estimation}

\newif\ifanonymous
\anonymousfalse

\ifanonymous\else
\author{
	\IEEEauthorblockN{Giovanni Cherubin}
	\IEEEauthorblockA{EPFL\\
		giovanni.cherubin@epfl.ch}
	\and
	\IEEEauthorblockN{Konstantinos Chatzikokolakis}
	\IEEEauthorblockA{University of Athens\\
		kostasc@di.uoa.gr}
	\and
	\IEEEauthorblockN{Catuscia Palamidessi}
	\IEEEauthorblockA{INRIA, École Polytechnique\\
		catuscia@lix.polytechnique.fr}
}

\fi

\maketitle
\thispagestyle{plain}
\pagestyle{plain}

\begin{abstract}
	We consider the problem of measuring how much a system
	reveals about its secret inputs.
	We work in the black-box setting: we assume no prior
	knowledge of the system's internals, and we run
	the system for choices of secrets and measure its leakage from the respective outputs.
	Our goal is to estimate the Bayes risk, 
	from which one can derive some of the most
	popular leakage measures (e.g., min-entropy
	leakage).
	
	The state-of-the-art method for estimating these
	leakage measures is the frequentist paradigm,
	which approximates the system's internals by
	looking at the frequencies of its inputs and outputs.
	Unfortunately, this does not scale for
	systems with large output spaces, where it would require
	too many input-output examples.
	Consequently, it also cannot be applied to systems
	with continuous outputs (e.g., time side channels, network traffic).
	
	In this paper, we exploit an analogy between
	Machine Learning (ML)  and black-box leakage estimation
	to show that the Bayes risk  
	of a system can be estimated by using a class
	of ML methods: the  \textit{universally consistent}
	learning rules; these rules can exploit patterns in the
	input-output examples to improve the estimates' convergence,
	while retaining formal optimality guarantees.
	We focus on a set of them,
	the nearest neighbor rules;
	we show that they
	significantly reduce
	the number of black-box queries required
	for a precise estimation
	whenever nearby outputs tend to be produced by
	the same secret;
	furthermore, some of them can
	tackle systems with continuous outputs.
	We illustrate the applicability of
	these techniques on both synthetic and real-world data,
	and we compare them with the state-of-the-art tool,
	\leakiest, which is based on the frequentist approach.
\end{abstract}

\IEEEpeerreviewmaketitle

\section{Introduction}

\begin{table*}[ht!]
	\caption{Number of examples required for convergence of the estimates.
		``X'' means an estimate did not converge.}
	\centering
	\renewcommand{\arraystretch}{1.3}
	\begin{tabular}{lcccc}
		\multicolumn{1}{l}{System}& {Dataset} & frequentist &    NN    &  \knn   \\
		\toprule

		Random                   & 100 secrets, 100 obs. & \textbf{10\,070} & \textbf{10\,070} & \textbf{10\,070} \\

		Geometric ($\nu=0.1$)  & 100 secrets, 10K obs. & 35\,016 & \textbf{333} & 458 \\
		Geometric ($\nu=0.02$) & 100 secrets, 10K obs. & 152\,904 & 152\,698 & \textbf{68\,058}\\
		Geometric ($\nu=2$)    & 10K secrets, 1K obs. & 95\,500 & \textbf{94\,204} & 107\,707 \\
		
		Multimodal Geometric ($\nu=0.1$)    & 100 secrets, 10K obs. & 44\,715 & \textbf{568} & 754 \\
		
		Spiky (contrived example) & 2 secrets, 10K obs. & \textbf{22\,908} & 29\,863 & 62\,325 \\
		~\vspace{-1.3em}\\ 
		\hline 
			Planar Geometric $\nu=2$ & \gowalla  checkins in San Francisco area & X  & X  & {\bf19\,948}\\
			Laplacian $\nu=2$ & " &N/A  & X  & {\bf19\,961} \\
			Blahut-Arimoto $\nu=2$ & " & 1\,285  & {\bf1\,170}  & 1\,343  \\
		\bottomrule
	\end{tabular}
	\vspace{1ex}
	
	The proposed tool, \bleau, is the combination of frequentist, NN, and \knn estimates,
	as an alternative to the frequentist paradigm.\\
	\label{tab:comparison}
\end{table*}

Measuring the information leakage of a system is one of the founding
pillars of security.
From side-channels to biases in random number generators, quantifying
how much information a system leaks about its secret inputs is
crucial for preventing adversaries from exploiting it;
this has been the focus of intensive research efforts in the areas of privacy and  of quantitative information flow (QIF).
Most approaches in the literature are based on the white-box approach, which consists in  
calculating analytically the channel matrix of the system,  constituted by the  conditional probabilities of the outputs given the secrets,    
and then computing the  desired
leakage measures 
(for instance, mutual information \cite{Clark:05:JLC}, min-entropy leakage \cite{Smith:09:FOSSACS}, or $g$-leakage~\cite{Alvim:12:CSF}).
However, 
while one typically has white-box access to the system they
want to secure, 
determining a system's leakage analytically is
often impractical, due to the size or complexity
of its internals, or to the presence of unknown factors. 
These obstacles led to 
investigate methods for   measuring a
system's leakage in a black-box manner.

Until a decade ago, the  most popular measure of leakage was 
Shannon mutual information (MI). However, in his seminal paper~\cite{Smith:09:FOSSACS} 
Smith showed that MI is not
appropriate to represent a realistic attacker, and proposed a notion of  
leakage based on R\'enyi min-entropy (\meleakage) instead. 
Consequently, in this paper we consider the general problem of estimating the Bayes risk of a system,
which is the smallest error achievable by an adversary at predicting its secret inputs
given the outputs.
From the Bayes risk one can derive several leakage measures, including \meleakage
and the additive and multiplicative
leakage~\cite{Braun:09:MFPS}.
These measures are considered by the QIF community among the most fundamental notions of leakage.

To the best of our knowledge, the only existing 
approach for the black-box estimation of the Bayes risk
comes from a
classical statistical technique,
which refer to as the
{\em frequentist} paradigm. 
The idea is to run the system repeatedly
on chosen secret inputs, and then count the
relative frequencies of the secrets and respective
outputs
so to estimate
their
joint probability distribution; 
from this distribution, it is then possible to
compute estimates of the desired leakage measure.
\leakwatch~\cite{ChothiaKN14:esorics} and \leakiest~\cite{Chothia:13:CAV},
two well-known tools for black-box leakage estimation,
are  applications of this principle.

Unfortunately, the frequentist approach does
not always  scale for real-world problems:
as the number of possible  input  and
output  values of the channel matrix increases, the number
of examples required for this method to
converge becomes too large to gather.
For example, \leakwatch requires a number of examples
that is much larger than the product of the size of
input and output space.
For the same reason, this method cannot be used for systems with continuous outputs; 
indeed,
it cannot even be
formally constructed in such a case.

\subsection*{Our contribution}
In
this paper, we show that machine learning (ML) methods
can provide the necessary scalability to black-box
measurements, and yet maintain formal guarantees
on their estimates.
By observing
a fundamental equivalence between ML and black-box leakage
estimation,
we show that any ML rule from a certain class 
(the {\em universally consistent} rules) can be used to estimate
with arbitrary precision
the leakage of a system.
In particular, we study
rules based on the nearest neighbor principle
-- namely, Nearest Neighbor (NN) and \knn, which exploit a metric
on the output space to achieve a
considerably faster convergence than frequentist
approaches. In \autoref{tab:comparison} we summarize the number of examples necessary for the 
estimators to converge, for the various systems considered in the paper.
We focus on nearest neighbor methods, among
the existing universally consistent rules, because:
i) they are simple to reason about,
and ii) we can identify the class of systems
for which they will excel,
which happens whenever the distribution is
somehow regular with respect to a metric on the output
(e.g., time side channels, traffic analysis, and
most mechanisms used for privacy).
Moreover, some of these methods can tackle directly
systems with continuous output.

We evaluate these estimators on synthetic data,
where we know the true distributions and
we can determine exactly when the estimates converge.
Furthermore, we use them
for measuring the leakage in a real dataset of users' locations,
defended with three state-of-the-art mechanisms: 
two geo-indistinguishability mechanisms (planar geometric and planar Laplacian)\cite{andres2013geo},
and the method by Oya et al.~\cite{Oya:17:CCS},
which we refer to as the Blahut-Arimoto mechanism. 
Crucially, the planar Laplacian is real-valued, which
\knn methods 
can tackle out-of-the box, but the frequentist method
cannot.
Results in both synthetic and real-world data
		show our methods give a strong advantage
		whenever there is a notion of metric in the output that can be exploited.
Finally, we compare  our methods with \leakiest on 
the problem of estimating the leakage of
European passports~\cite{chothia2010traceability,Chothia:13:CAV},
and on the location privacy data.

As a further evidence of their practicality,
we use them in  Appendix~\ref{appendix:exponentiation}
to measure the leakage of a time side channel in
a hardware implementation
of finite field exponentiation.

\subsection*{No Free Lunch}
A central takeaway of our work is that, while all
the estimators we study (including the frequentist
approach) are asymptotically optimal in the number of
examples, none of them
can guarantee on its finite sample performance;
indeed, no estimator can.
This is a consequence of the No Free Lunch
theorem in ML~\cite{wolpert1996lack}, which informally states that
all learning rules are equivalent
among the possible distributions of data.
This rules out the existence of an
optimal estimator.

In practice, this means that we should always evaluate
several estimators, and select the one that converged faster.
Fortunately, our main finding
(i.e., \textit{any universally consistent ML rule is a leakage estimator})
adds a whole new class
of estimators, which one can use in practical
applications.

We therefore propose a tool, \bleau
(Fast Black-box Leakage Estimation AUtomated),
which computes nearest neighbor and frequentist
estimates, and selects the one converging faster.
We release it as
Open Source software\footnote{\url{https://github.com/gchers/fbleau}.},
and we hope in the future to extend it to
support several more estimators based on UC ML rules. %

\subsection*{Nearest Neighbor rules}

Nearest neighbor
rules excel whenever there is a notion of metric on the output space, 
and the output distributions is ``regular'' (in the sense that it does not change too abruptly between two neighboring points).
We expect this to be the case for several real-world systems,
such as: side channels whose output is time, an electromagnetic signal, or power consumption;
for traffic analysis on network packets; and for geographic location data.
Moreover, most mechanisms used in privacy and QIF use smooth noise distributions.
Suitable applications may also come from recent
attacks to ML models, such as model inversion~\cite{fredrikson2015model}
and membership inference~\cite{shokri2017membership}.

Furthermore, we observe that even when there is no metric, or when the output distribution is irregular, 
(e.g., a system whose internal distribution has been randomly
sampled), these rules are equivalent to the frequentist
approach.
Indeed, the only case we observe when they are misled is when the
system is crafted so that the metric contributes \textit{against}
classification (e.g., see ``Spiky'' example in \autoref{tab:comparison}).

\section{Related Work}

Chatzikokolakis et al.~\cite{chatzikokolakis2010statistical}
introduced methods for measuring the leakage of a deterministic
program in a black-box manner;
these methods worked by collecting a large number of inputs and
respective outputs, and by estimating the underlying probability
distribution accordingly; this is what we refer
to as the frequentist paradigm.
A fundamental development of their work by Boreale and Paolini~\cite{boreale2014formally}
showed that, in the absence of significant a priori information about
the output distribution,
no estimator does better than the exhaustive enumeration of the input domain.
In line with this work, \autoref{sec:nfl} will show that, as a consequence of the No Free Lunch
theorem in ML, no leakage estimator can claim to converge faster than
any other estimator for all distributions.

The best known tools for black-box estimation of leakage
		measures based on the Bayes risk (e.g., min-entropy) are 
\leakiest~\cite{Chothia:11:CSF,Chothia:13:CAV} and
\leakwatch~\cite{ChothiaKN14:esorics,chothia2013probabilistic}, both
based on the frequentist paradigm.
The former also allows a zero-leakage test for systems
with continuous outputs.
In \autoref{sec:comparison} we provide a comparison of \leakiest with our proposal.

Cherubin~\cite{cherubin2017bayes}
used the guarantees of nearest neighbor learning rules
for estimating the information leakage
(in terms of
		the Bayes risk)
of defenses against website fingerprinting attacks
in a black-box manner.

Shannon mutual information (MI) is the main alternative to the Bayes risk-based notions of 
leakage in the QIF literature.
Although there is a relation between MI  and Bayes risk~\cite{Santhi:06:ACCCC}, the corresponding  models of attackers are very different: 
the first corresponds to an attacker who can try infinitely
many times to guess the secret, while the second has only
one try at his disposal~\cite{Smith:09:FOSSACS}. 
Consequently, MI and Bayes-risk measures, such as ME,
can give very different results:
Smith~\cite{Smith:09:FOSSACS} shows two programs that have almost
the same MI, but one has an ME  several orders of magnitude larger than the other one;
conversely, there are examples of two programs such that
ME is $0$ for both, while the MI is $0$ in one case and strictly positive (several bits) in the other one. 

In the black-box literature, MI is usually computed by
using Kernel Density Estimation,
which although only
guarantees asymptotic optimality
under smoothness assumptions on the distributions.
On the other hand, the ML literature offered developments in this area:
Belghazi et al.~\cite{belghazi2018mine} proposed an MI lower bound
estimator based on deep neural networks, and proved its consistency
(i.e., it converges to the true MI value asymptotically).
Similarly, other works constructed MI variational lower bounds~\cite{chen2016infogan,chen2018learning}.

\section{Preliminaries}
\label{sec:preliminaries}

\begin{table}
	\centering
	\caption{Symbols table.}
	\begin{tabular}{cl}
		Symbol & Description \\ \toprule		
		$s \in \secretspace$ & A secret\\
		$o \in \objectspace$ & An object/black-box output\\
		$(s, o) \in \examplespace$ & An example\\
		$(\priors, \channel_{\secret,\object})$ & A system, given a set of
				priors $\priors$ and channel matrix $\channel$\\
		$\mu$ & Distribution induced by a system on $\examplespace$\\
		$f: \secretspace \mapsto \objectspace$ & A classifier\\
		$\riskfunction$ & Loss function
			w.r.t. which we evaluate a classifier\\
		$R^f$ & The expected misclassification error of a classifier $f$\\
		$R^*$ & Bayes risk\\
		\bottomrule
	\end{tabular}
	\label{tab:symbols-table}
\end{table}

We define a system,
and show that its leakage can be expressed in terms of the Bayes
risk.
We then introduce ML notions,
which we will later use
to estimate the Bayes risk.

\subsection{Notation}
We consider a system $(\priors, \channel_{\secret,\object})$,
that associates a secret input
$\secret$ to an observation (or object) $\object$ in a possibly
randomized way.
The system is defined by a set of prior probabilities
$\prior{\secret} := P(\secret)$, $\secret \in \secretspace$, and a
channel matrix $\channel$ of size $|\secretspace| \times |\objectspace|$,
for which $\channel_{\secret,\object} := P(\object | \secret)$
for $\secret \in \secretspace$ and $\object \in \objectspace$.
We call \examplespace the example space.
We assume the system does not change over time;
for us, $\secretspace$ is finite, and
$\objectspace$ is finite unless otherwise stated.

\subsection{Leakage Measures}
The state-of-the-art in  QIF is represented by  the leakage measures based on $g$-{\em vulnerability}, 
a family whose most representative member is min-vulnerability~\cite{Smith:09:FOSSACS}, the complement of the Bayes risk.
This paper
is concerned with finding tight
estimates of the Bayes risk,
which can then be used to estimate the
appropriate leakage measure.

\paragraph{Bayes risk}

The Bayes risk, \bayesrisk, is the error of the optimal (idealized) classifier
for the task of predicting a secret $s$ given an observation $o$
output by a system.
It is defined with respect to
a loss function $\riskfunction: \secretspace \times \secretspace \mapsto \mathbb{R}_{\geq 0}$,
where $\riskfunction(s, s')$ is the risk of an adversary predicting $s'$ for
an observation $o$, when its
actual secret is $s$.
We focus on the 0-1 loss function,
$\riskfunction(s, s') := I(s \neq s')$, taking value $1$ if $s \neq s'$,
$0$ otherwise.
The Bayes risk of a system $(\priors, \channel_{\secret,\object})$
is defined as:
\begin{equation}
\bayesrisk:= 1 - \sum_{\object \in \objectspace}
\max_{\secret \in \secretspace} \channel_{\secret, \object} \prior{s} \,.
\label{eq:bayes-from-channel}
\end{equation}

\paragraph{Random guessing}
A baseline for evaluating a system
is the error committed by an idealized
adversary who knows priors but has no access to
the channel, and who's best strategy
is to always output
the secret with the highest prior.
We call the error of this adversary random guessing error:
\begin{equation}
	\errorguesspriors := 1 - \max_{\secret \in \secretspace} \prior{\secret} \,.
\end{equation}

\subsection{Black-box estimation of \bayesrisk}

This paper is concerned with estimating the Bayes risk
given $n$ examples sampled from the joint distribution $\mu$ on $\examplespace$
generated by $(\priors, \channel_{\secret,\object})$.
By running the system $n$ times on secrets
$s_1,\ldots , s_n \in \secretspace$, chosen according to $\priors$,
we generate a sequence of corresponding outputs $o_1,\ldots , o_n$, thus forming a  {\em training set}\footnote{In line with
	the ML literature,
	we call the training or test ``set'' what is technically
	a multiset;
	also, we loosely
	use the set notation ``$\{\}$'' for both
	 sets and multisets when  the nature of the object
	is clear from the context.} of examples
$\{(o_1, s_1), ..., (o_n, s_n)\}$.
From these data, we aim to make an estimate
close to the real Bayes
risk.%

\subsection{Learning Rules}

We introduce
ML rules (or, simply, {\em learning rules}), which are
algorithms
for selecting a classifier given a set of training examples.
In this paper, we will use
the error of some ML rules
as an estimator of the Bayes risk.

Let $\classifierspace := \{ \classifier \mid \classifier: \objectspace \mapsto \secretspace \}$
be a set of classifiers.
A learning rule is a possibly randomized algorithm that, given a training
set $\{(o_1, s_1), ..., (o_n, s_n)\}$,
returns a classifier $\classifier \in \classifierspace$,
with the goal of minimizing the expected loss $\mathbb{E}\,\riskfunction(\classifier(o), s)$ for a
new example $(\object, \secret)$ sampled from $\mu$~\cite{vapnik2013nature}.
In the case of  the 0-1 loss function, the expected loss coincides with the \emph{expected probability of error} (\emph{expected error} for short), and if $\mu$ is generated by a system $(\priors, \channel_{\secret,\object})$, then the expected error of a classifier $\classifier : \objectspace \mapsto \secretspace$ is:
\begin{equation}\label{eq:error}
R^{\classifier} = 1 - \sum_{\object \in \objectspace}
\channel_{\classifier(\object), \object} \prior{f(\object)}  
\end{equation}
where $\classifier(\object)$ is the secret predicted for object $\object$.
If $\objectspace$ is infinite (and $\mu$ is continuous) the summation
is replaced by an integral.

\subsection{Frequentist estimate of \bayesrisk}
\label{sec:frequentist}

The frequentist paradigm~\cite{chatzikokolakis2010statistical}
for measuring the leakage of
a channel consists in estimating the probabilities
$\channel_{\secret,\object}$ by counting their frequency
in the training data $(\object_1, \secret_1), ..., (\object_n, \secret_n)$:
\begin{equation}
P(\object | \secret) \approx \hat{\channel}_{\secret, \object} :=
\frac{|i : \object_i = \object, \secret_i = \secret|}{|i : \secret_i = \secret|} \,.
\end{equation}

We can obtain the frequentist error from
\autoref{eq:error}: 
\begin{equation}R^{\mathit{Freq}} =  1- \sum_o  C_{f^\mathit{Freq}(o),o} \; \pi(f^\mathit{Freq}(o)) \end{equation}
where $f^\mathit{Freq}$ is the frequentist classifier, namely:
\begin{equation}
	f^\mathit{Freq}(o) =
	\begin{cases}
		\argmax_s (\hat{\channel}_{\secret, \object} \, \hat{\pi}(\secret)) &
			\mbox{if $o$ in training data}\\
		\argmax_s \hat{\pi}(\secret) & \mbox{otherwise} \,,
	\end{cases}
\end{equation}
where $\hat{\pi}$ is estimated from the examples:
$\hat{\pi}(\secret)=\nicefrac{|i : \secret_i = \secret|}{n}$.

Consider a finite example space $\examplespace$.	
Provided with enough examples, the frequentist approach always
converges:
clearly,
$\hat{\channel} \rightarrow \channel$ as $n \rightarrow \infty$,
because events' frequencies converge to
their probabilities by the Law of Large Numbers.

However, there is a fundamental issue with this approach.
Given a training set $\{(o_1, s_1), ..., (o_n, s_n)\}$,
the frequentist classifier can tell something meaningful
(i.e., better than random guessing)
for an object $\object \in \objectspace$, only as long as $o$
appeared in the training set;
but, for very large systems (e.g., those with a large object space),
the probability of observing an example for each object
becomes small, and the frequentist
classifier approaches \textit{random guessing}.
We study this matter further in \autoref{subsec:random-system}
and Appendix~\ref{appendix:frequentist}.

\section{No Free Lunch In Learning}
\label{sec:nfl}

The frequentist approach performs well only for objects
it has seen in the training data;
in the next section, we will introduce estimators
that aim to provide good predictions even for unseen objects.
However, we shall first answer an important question:
is there an estimator that is ``optimal'' for all systems?

A negative answer to this question is given by
the so-called ``No Free Lunch'' (NFL) theorem
by Wolpert~\cite{wolpert1996lack}.
The theorem is formulated for the expected
	loss of a learning rule on {\em unseen} objects
	(i.e., that were not in the training data),
	which is referred to as
the off-training-set (OTS) loss.

\begin{theorem}[No Free Lunch]
	Let  $A_1$ and $A_2$ be two learning rules,  $\riskfunction$ 
	  a cost function, and $\mu$ a distribution on $\examplespace$. We indicate by 
	$\mathbb{E}_i(\riskfunction  \mid  \mu, n)$ the OTS loss of $A_i$ given $\mu$ and $n$, 
	where the expectation is computed over all the possible training sets of size $n$ 
	sampled from $\mu$. 
	Then, if we take the uniform average among all possible distributions $\mu$, we have
	\begin{equation}\mathbb{E}_1(\riskfunction \mid \mu, n) = \mathbb{E}_2(\riskfunction \mid \mu, n) \,.\end{equation}
\end{theorem}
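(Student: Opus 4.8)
\emph{Proof plan.} The plan is to run Wolpert's counting/symmetry argument in the present setting, where a ``target'' is not a function but a family of conditional distributions of the secret given the object. First I would make the averaging precise. Writing a distribution $\mu$ on $\examplespace$ as a marginal $\mu_{\objectspace}$ on the (finite) object space together with one conditional $\mu(\cdot\mid\object)\in\Delta(\secretspace)$ per object $\object\in\objectspace$, ``the uniform average among all possible distributions $\mu$'' means integration against the product of the uniform (Lebesgue) measures on $\Delta(\objectspace)$ and on $\Delta(\secretspace)^{\objectspace}$; equivalently one may carry out the whole argument with $\Delta(\secretspace)$ replaced by $\secretspace$ equipped with counting measure, i.e.\ averaging over deterministic targets $\classifier:\objectspace\mapsto\secretspace$, which is Wolpert's original formulation. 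Next I would unfold the off-training-set loss as
\begin{equation}
\mathbb{E}_i(\riskfunction\mid\mu,n)=\mathbb{E}_{T}\;\mathbb{E}_{\object^{*}}\;\mathbb{E}_{\secret^{*}\sim\mu(\cdot\mid\object^{*})}\bigl[\riskfunction(\secret^{*},A_i(T)(\object^{*}))\bigr],
\end{equation}
where $T$ is a size-$n$ training set sampled from $\mu$, the test object $\object^{*}$ is drawn from $\mu_{\objectspace}$ restricted to the objects not occurring in $T$ (with the convention that the OTS loss is $0$ when $T$ already covers $\objectspace$), and for a randomized rule one prepends an expectation over the rule's internal coins.

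The core step is to average this over $\mu$ and exploit that $\object^{*}\notin T$. Fix $T$ and $\object^{*}$ and split the integral over $\mu$ into the integral over $\mu_{\objectspace}$, the integrals over the conditionals $\mu(\cdot\mid\object)$ at the objects occurring in $T$, and the integral over $\mu(\cdot\mid\object^{*})$. Because $\object^{*}$ does not occur in $T$, the training set $T$ --- hence the classifier $A_i(T)$, hence its prediction $\hat\secret:=A_i(T)(\object^{*})$ --- is a function of $\mu_{\objectspace}$ and of the conditionals at the objects of $T$ only, and in particular is independent of $\mu(\cdot\mid\object^{*})$. I may therefore perform the innermost integral, over $p=\mu(\cdot\mid\object^{*})\in\Delta(\secretspace)$, with $\hat\secret$ held fixed; since the uniform average of $p\mapsto p(\secret)$ over $\Delta(\secretspace)$ is $1/|\secretspace|$ for every $\secret$, it equals $\frac{1}{|\secretspace|}\sum_{\secret\in\secretspace}\riskfunction(\secret,\hat\secret)$, which for the $0$--$1$ loss is the constant $1-1/|\secretspace|$ (more generally, all this step needs is that every column of $\riskfunction$ has the same average). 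After this integration the integrand no longer mentions $A_i$ at all, so the value of $\int\mathbb{E}_i(\riskfunction\mid\mu,n)\,d\mu$ is a quantity depending only on $n$, on $\riskfunction$, on $\objectspace$ and $\secretspace$, and on the distribution of the pair $(T,\object^{*})$ --- and that distribution is induced by $\mu$ alone, identically for $A_1$ and $A_2$. Hence the two uniform averages coincide, which is the asserted equality (read, as the informal statement intends, with the average over $\mu$ applied to both sides).

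I expect the difficulty to be bookkeeping rather than depth. The points that need care are: (i) pinning down ``uniform over all $\mu$'' and justifying the interchange of $\int d\mu$ with the finitely many nested expectations --- routine by Tonelli, since the loss is nonnegative and bounded and all the object/secret sums are finite; (ii) fixing, once and for all, the convention for the OTS loss when $T$ covers $\objectspace$ and whether the test object is drawn $\mu_{\objectspace}$-proportionally or uniformly among the uncovered objects --- the cancellation is insensitive to this choice but it must be the same on both sides; and (iii) the observation that makes everything work, namely that the learner's OTS prediction is probabilistically independent of the test label once $\mu(\cdot\mid\object^{*})$ is averaged out, which is exactly where the off-training-set restriction is essential: without it the prediction at $\object^{*}$ could be correlated with the true label there and the symmetry would break. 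Point (iii) is the conceptual heart; (i)--(ii) are where most of the writing goes.
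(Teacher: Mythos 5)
The paper does not actually prove this theorem: it is imported verbatim from Wolpert~\cite{wolpert1996lack} and used as a black box, so there is no in-paper proof to compare against. Judged on its own, your reconstruction is the standard (and correct) symmetry argument behind the NFL theorem: decompose $\mu$ into the object marginal and the per-object conditionals, observe that on the event $\object^{*}\notin T$ the learner's prediction $A_i(T)(\object^{*})$ is measurable with respect to components of $\mu$ that are independent (under the product averaging measure) of $\mu(\cdot\mid\object^{*})$, integrate out $\mu(\cdot\mid\object^{*})$ first, and note that the result no longer mentions $A_i$. Your point (iii) is indeed the heart of the matter, and your handling of randomized rules and of the Tonelli interchange is fine. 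Two remarks. First, you correctly surface a hypothesis that the paper's informal statement suppresses: the strict equality holds for a general cost function $\riskfunction$ only when the loss is \emph{homogeneous} in Wolpert's sense (every column of the loss matrix has the same average over $\secretspace$); this holds for the $0$--$1$ loss used throughout the paper, but the theorem as stated with an arbitrary ``cost function'' is slightly too strong without that condition. Second, the displayed equation in the statement is itself written loosely (it still carries a free $\mu$); your reading --- that the uniform average over $\mu$ is applied to both sides --- is the intended one and matches Wolpert's formulation.
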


Intuitively, the NFL theorem says that, if all distributions
(and, therefore, all channel matrices) are equally likely, then
all learning algorithms are equivalent.
Remarkably, this holds for {\em any} strategy, even if
one of the rules is random guessing.

An important implication of this for our purposes is that
for every two learning rules $A$ and $B$ there will always exist
some system for which rule $A$ converges faster than $B$,
and vice versa there will be a system for which $B$ outperforms
$A$.

From the practical perspective of black-box security, this demonstrates that
we should always test several estimators
and select the one that converges faster.
Fortunately, the connection between ML and black-box security
we highlight in this paper results in the discovery of
a whole class of new estimators.

\section{Machine Learning Estimates of the Bayes Risk}
\label{sec:ml-estimates}

In this section, we define the notion of a {\em universally consistent}
learning rule, and show that the error of
a classifier selected according to such a rule can be used 
for estimating the Bayes risk.
Then, we introduce various universally consistent
rules based on the nearest neighbor principle.

Throughout the section, we use interchangeably 
a system $(\priors, \channel_{\secret,\object})$ and its corresponding joint distribution $\mu$ on $\examplespace$. 
Note that there is a one-to-one correspondence between them.

\subsection{Universally Consistent Rules}
Consider a distribution $\mu$ and  
a learning rule $A$ selecting a classifier
$\classifier_n \in \classifierspace$
according to $n$ training examples sampled from  $\mu$.
Intuitively, as the available training data increases,
we would like the expected error 
of  $\classifier_n$ for a new example $(o, s)$ sampled
from
$\mu$ to be minimized (i.e., to get close the Bayes risk).
The following definition captures this intuition.

\begin{definition}[Consistent Learning Rule]
Let $\mu$ be a distribution on ${\examplespace}$ and 
let $A$ be a learning rule. 
Let $\classifier_n \in \classifierspace$ be  a classifier selected by $A$ 
using $n$ training examples sampled from  $\mu$.
Let $(\priors, \channel_{\secret,\object})$ be the system corresponding to $\mu$, and let 
$R^{\classifier_n}$ be the expected error of $\classifier_n$, as 
defined by~\eqref{eq:error}. 
We say that $A$ is consistent if $R^{\classifier_n} \rightarrow \bayesrisk$
	as $n \rightarrow \infty$.
\end{definition}
The next definition strengthens this property, by asking
the rule to be consistent for all distributions:
\begin{definition}[Universally Consistent (UC) Learning Rule]
	A learning rule is universally consistent if it is consistent for any
	distribution $\mu$ on \examplespace.
\end{definition}

By this definition, the expected error of  a classifier selected according to a
universally consistent rule is also an estimator of the Bayes
risk, since it converges to \bayesrisk as $n\rightarrow\infty$.

In the rest of this section we introduce Bayes risk estimates
based on universally consistent nearest neighbor rules;
they are summarized in \autoref{tab:estimates} together with
their guarantees.

\begin{table}
	\centering
	\caption{Estimates' guarantees as $n \rightarrow \infty$}
	\renewcommand{\arraystretch}{1.3}
	\begin{tabular}{lcc}
		\toprule
		Method & Guarantee & Space $\objectspace$ \\
		\midrule
		frequentist & $\rightarrow \bayesrisk$ & finite\\
		NN & $\rightarrow \bayesrisk$ & finite\\
		\knn & $\rightarrow \bayesrisk$ & infinite, $(\distance, \objectspace)$ separable\\
		NN Bound$^+$ & $\leq \bayesrisk$ & infinite, $(\distance, \objectspace)$ separable\\
		\bottomrule
	\end{tabular}\\
	$^+$NN Bound is discussed in Appendix~\ref{appendix:more-tools}.
	\label{tab:estimates}
\end{table}

\subsection{NN estimate}

The Nearest Neighbor (NN) is one of the simplest ML classifiers:
given a training set and a new object $o$, it predicts
the secret of its closest training observation (\textit{nearest neighbor}).
It is defined both for finite and infinite object spaces,
although it is UC only in the first case.

We introduce a formulation of NN, which can
be seen as an extension of the frequentist approach,
that takes into account \textit{ties} (i.e., neighbors that are equally close to
the new object $o$), and which guarantees consistency
when $\objectspace$ is finite.

Consider a training set $\{(o_1, s_1), ..., (o_n, s_n)\}$,
an object  $o$,
and   a distance metric
$\distance: \objectspace \times \objectspace \mapsto \mathbb{R}_{\geq0}$.
The NN classifier predicts a secret for $o$ by taking
a majority vote over the set of secrets whose objects
have the smallest distance to $o$.
Formally, let $I_\mathit{min}(o) = \{ i \mid  \distance(o, o_i) = \min_{j=1...n} \distance(o, o_j)  \}$
and define:
\begin{equation}
	\mathit{NN}(o) = s_{h(o)}
\end{equation}
where
\begin{equation}
	h(o) = \argmax_{i\in{I_\mathit{min}(o)}} |\{ j \in I_\mathit{min}(o)\mid s_j=s_i\}| \,.
\end{equation}

We show that NN is universally consistent
for finite $\examplespace$.

\begin{theorem}[Universal consistency of NN]
	Consider a distribution on $\examplespace$, where
	$\secretspace$ and $\objectspace$ are finite.
	Let $R^{\mathit{NN}}_n$ be the expected error of the NN classifier
	for a new observation $o$.
	As the number of training examples $n \rightarrow \infty$:
	\begin{equation}R^{\mathit{NN}}_n \rightarrow \bayesrisk.\end{equation}
\end{theorem}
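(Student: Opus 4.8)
The plan is to argue that, because $\objectspace$ is finite, with enough training data every object $o \in \objectspace$ that has positive marginal probability will eventually appear in the training set, and once it does the NN classifier restricted to $o$ behaves exactly like the frequentist classifier restricted to $o$ — it takes a majority vote over the labels observed at $o$ itself (the set $I_{\mathit{min}}(o)$ collapses to exactly the indices with $o_i = o$ once $o$ is present, since those have distance $0$). So the first step is to fix an object $o$ with $\mu$-marginal $p(o) := \sum_s \channel_{s,o}\prior{s} > 0$, and note that the number $N_n(o)$ of training examples landing on $o$ satisfies $N_n(o) \to \infty$ almost surely as $n \to \infty$ (it is a Binomial$(n, p(o))$ count). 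Objects with $p(o) = 0$ contribute nothing to $\bayesrisk$ and occur with probability $0$, so they can be ignored.

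The second step is to show that, conditioned on $N_n(o) = m$, the majority-vote label $h(o)$ selects $\argmax_s \channel_{s,o}\prior{s}$ with probability tending to $1$ as $m \to \infty$: the $m$ labels at $o$ are i.i.d. draws from the posterior $P(s \mid o) \propto \channel_{s,o}\prior{s}$, so by the Law of Large Numbers the empirical label frequencies converge to this posterior, and the majority vote eventually picks the mode. (If the mode is non-unique, any maximizer is Bayes-optimal for $o$, so it does not matter which tie-break occurs.) Hence the conditional error of NN at $o$, namely $1 - \mathbb{E}[\channel_{h(o),o}\prior{h(o)} \mid \text{data}]/p(o)$... — more cleanly: $\mathbb{E}\bigl[\channel_{h(o),o}\prior{h(o)}\bigr] \to \max_s \channel_{s,o}\prior{s}$. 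The third step assembles this: write $R^{\mathit{NN}}_n = 1 - \sum_{o} \mathbb{E}\bigl[\channel_{\mathit{NN}(o),o}\prior{\mathit{NN}(o)}\bigr]$ using~\eqref{eq:error}, split the (finite) sum into objects seen vs. unseen, bound the unseen contribution by $\sum_o p(o)\,P(N_n(o) = 0) \to 0$, and apply the pointwise convergence from step two to the seen terms. Since $\objectspace$ is finite we may exchange limit and sum freely, giving $R^{\mathit{NN}}_n \to 1 - \sum_o \max_s \channel_{s,o}\prior{s} = \bayesrisk$ by~\eqref{eq:bayes-from-channel}.

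I expect the main obstacle to be bookkeeping around the tie structure in $I_{\mathit{min}}(o)$ and making the ``eventually behaves like frequentist'' claim airtight: one must check that once $o$ appears, no strictly-closer neighbor exists (trivially true, distance $0$ is minimal), and that the randomized tie-breaking inside $\argmax$ does not spoil convergence — handled by noting every tie-winner is Bayes-optimal at $o$. A minor secondary point is uniform control of the rate: since there are only finitely many objects, a union bound over $o \in \objectspace$ converts the per-object almost-sure statements into convergence of the expectation $R^{\mathit{NN}}_n$, so no delicate uniformity argument is needed. Everything else is an application of the Law of Large Numbers exactly as in the frequentist consistency argument given in~\autoref{sec:frequentist}.
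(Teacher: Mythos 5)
Your argument is correct and follows essentially the same route as the paper's (sketch) proof: once an object appears in the training set, NN at that object coincides with the frequentist majority vote (distance-$0$ neighbors are exactly the repeated occurrences of $o$), the finite object space guarantees all positive-probability objects are eventually seen, and the Law of Large Numbers drives the per-object vote to the Bayes-optimal prediction. Your write-up merely fills in the details the paper leaves implicit (the binomial count $N_n(o)\to\infty$, the tie-breaking being harmless since any mode is Bayes-optimal, and the exchange of limit and finite sum).
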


\begin{proof}[Sketch proof]
	For an observation $\object$ that appears in the
	training set, the NN classifier is equivalent to the frequentist
	approach. For a finite space $\examplespace$, as $n \rightarrow \infty$,
	the probability that the training set contains all $\object \in \objectspace$
	approaches $1$.
	Thus, the NN rule is asymptotically (in $n$) equivalent to
	the frequentist approach,  which means its error also converges
	to \bayesrisk.
\end{proof}

\subsection{\knn estimate}

Whilst NN guarantees universal consistency in
finite example spaces,
this does not hold for infinite
\objectspace.
In this case, we can achieve  universal consistency
with the k-NN classifier,
an extension of NN, for appropriate choices of the parameter $k$.

The k-NN classifier takes a majority
vote among the secrets of its neighbors.
Breaking ties in the k-NN definition requires more care than with NN.
In the literature, this is generally done via strategies that
add randomness or arbitrariness to the choice (e.g., if two neighbors
have the same distance, select the one with the smallest index
in the training data)~\cite{devroye2013probabilistic}.
We use a novel tie-breaking strategy, which takes into
account ties, but gives more importance to the closest neighbors.
In early experiments, we observed this strategy had a faster convergence
than standard approaches.

Consider a training set $\{(o_1, s_1), ..., (o_n, s_n)\}$,
an object to predict $o$, and some metric
$\distance: \objectspace \times \objectspace \mapsto \mathbb{R}_{\geq0}$.
Let $o_{(i)}$ denote the $i$-th closest object to $o$, and $s_{(i)}$
its respective secret.
If ties do not occur after the $k$-th neighbor (i.e., if
$\distance(o, o_{(k)}) \neq \distance(o, o_{(k+1)})$), then k-NN outputs the
most frequent among the secrets of the first $k$ neighbors:
\begin{equation}k\mbox{-}\mathit{NN}(o) = s_{h(o)} \end{equation}
where
\begin{equation}h(o) = \argmax_{i=1, ..., k} |\{ j\in I_\mathit{min}(o) \mid s_{(j)}=s_{(i)}\}| \,.\end{equation}
If ties exist after the $k$-th neighbor, that is,
for $k' \leq k < k''$:
\begin{equation}\distance(o, o_{(k')}) = ... = \distance(o, o_{(k)}) = ... = \distance(o, o_{(k'')}) \,,\end{equation}
we proceed as follows.
Let $\hat{s}$ be the most frequent secret in
$\left\{s_{(k')}, ..., s_{(k'')}\right\}$;
k-NN predicts the most frequent secret in
the following multiset, truncated at the tail to have size $k$:
\begin{equation*}
	s_{(1)}, s_{(2)}, ..., s_{(k'-1)}, \hat{s}, \hat{s}..., \hat{s} \,.
\end{equation*}

We now define \knn, a universally consistent learning rule
that selects a k-NN classifier for a training set of $n$ examples
by choosing $k$ as a function of $n$.

\begin{definition}[\knn rule]
	Given a training set of $n$ examples, the \knn rule selects
	a k-NN classifier, where $k$ is chosen
	such that $k_n \rightarrow \infty$ and $k_n / n \rightarrow 0$ as $n \rightarrow \infty$.
\end{definition}

Stone proved that \knn is universally
consistent~\cite{stone1977consistent}:%

\begin{theorem}[Universal consistency of the $k_n$-NN rule]
	\label{thm:knn-consistency}
	Consider a probability distribution $\mu$
	on the example space $\examplespace$,
	where $\mu$ has a density.
	Select a distance metric $\distance$ such that
	$(\distance, \objectspace)$ is separable\footnote{A separable space
		is a space containing a countable dense subset;
		e.g., finite spaces
		and
		the space of $q$-dimensional vectors $\mathbb{R}^q$
		with Euclidean metric.}.
	Then the expected error of the $k_n$-NN rule
	converges to \bayesrisk as $n \rightarrow \infty$.
\end{theorem}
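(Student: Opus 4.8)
The plan is to follow Stone's original argument, which splits into a reduction from classification to posterior-probability estimation, followed by the verification of a general sufficient condition for the consistency of local-averaging estimates. For each secret $s \in \secretspace$ put $\eta_s(o) := P(\secret = s \mid \object = o)$, so that the Bayes classifier is $o \mapsto \argmax_{s} \eta_s(o)$ and, by~\eqref{eq:bayes-from-channel}, $\bayesrisk = 1 - \mathbb{E}\bigl[\max_{s} \eta_s(\object)\bigr]$. The $k_n$-NN classifier is the \emph{plug-in} rule $f_n(o) = \argmax_{s} \hat{\eta}_{n,s}(o)$, where $\hat{\eta}_{n,s}(o) := \tfrac{1}{k_n}\bigl|\{\, i \le k_n : s_{(i)} = s \,\}\bigr|$ is the fraction of the $k_n$ nearest neighbors of $o$ carrying secret $s$; the tie-breaking rule defined above only relabels points on the event $\distance(\object, \object_{(k_n)}) = \distance(\object, \object_{(k_n+1)})$, a null event since the marginal of $\mu$ on $\objectspace$ has a density, so it does not affect the asymptotics. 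I would then invoke the standard plug-in inequality
\begin{equation*}
	R^{f_n} - \bayesrisk \;\le\; 2 \sum_{s \in \secretspace} \mathbb{E}\,\bigl|\hat{\eta}_{n,s}(\object) - \eta_s(\object)\bigr| ,
\end{equation*}
which, together with the trivial bound $R^{f_n} \ge \bayesrisk$, reduces the theorem to proving $\hat{\eta}_{n,s} \to \eta_s$ in $L^1$ (with respect to the law of $\object$) for every $s \in \secretspace$.

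Next, I would recognise $\hat{\eta}_{n,s}(\object) = \sum_{i=1}^{n} W_{ni}(\object)\, I(s_i = s)$ as a \emph{local-averaging estimate} with nonnegative weights $W_{ni}(\object) := \tfrac{1}{k_n}\, I(\object_i \text{ is among the } k_n \text{ nearest neighbors of } \object)$ that sum to one, and check Stone's three conditions: (iii) $\max_{i} W_{ni}(\object) = 1/k_n \to 0$ because $k_n \to \infty$; (i) there is a constant $c$, depending only on the space, with $\mathbb{E} \sum_{i} W_{ni}(\object)\, g(\object_i) \le c\, \mathbb{E}\, g(\object)$ for every nonnegative measurable $g$; and (ii) for every $a > 0$, $\sum_{i} W_{ni}(\object)\, I(\distance(\object, \object_i) > a) \to 0$ in probability. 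Stone's theorem states that (i)--(iii) force $\mathbb{E}\bigl[(\hat{\eta}_{n,s}(\object) - \eta_s(\object))^2\bigr] \to 0$, hence the $L^1$ convergence needed above, because each $\eta_s$ is a bounded regression function; so it remains to establish (i) and (ii).

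These two are the technical heart. For (i) the key is Stone's geometric lemma: in $(\mathbb{R}^{q}, \text{Euclidean})$ a fixed point $\object$ can belong to the $k_n$-nearest-neighbor set of at most $k_n \gamma_q$ of the sample points, where $\gamma_q$ is the number of cones of angle less than $\pi/3$ needed to cover $\mathbb{R}^q$ from a common apex. This rests on the elementary fact that if two points lie in one such cone with apex $\object$, the one farther from $\object$ is closer to the nearer one than it is to $\object$; hence within each cone only the $k_n$ points closest to $\object$ can have $\object$ among their own $k_n$ nearest neighbors, giving at most $k_n \gamma_q$ in total. Taking expectations and using exchangeability then yields (i) with $c = \gamma_q$. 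This is the step that genuinely uses the structure of the space --- hence the restriction to separable metric spaces of the kind in the footnote; when $\objectspace$ is finite the estimate is eventually the frequentist one and consistency reduces to the NN argument. For (ii) I would use $k_n / n \to 0$: for almost every $o$ the ball $B(o, a)$ has positive mass $p(o) > 0$, so the number of sample points landing in $B(o, a)$ is $\mathrm{Bin}(n, p(o))$ with mean $n\,p(o)$, and since $k_n / n \to 0$ we have $n\,p(o) - k_n \to \infty$, so this count exceeds $k_n$ with probability tending to one; hence the $k_n$-th nearest neighbor of $\object$ lies within distance $a$ with high probability and the weight on points outside $B(\object, a)$ vanishes in probability. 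I expect (i) --- pinning down the cone-covering constant and the underlying ``closer-to-apex'' geometry --- to be the main obstacle; (ii) is a routine law-of-large-numbers estimate, and the reduction is a textbook plug-in bound.
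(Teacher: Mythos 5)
Your proposal is correct, but note that the paper does not actually prove this theorem: it states it and cites Stone's 1977 result. What you have written is a faithful reconstruction of that cited proof in its standard form (the plug-in reduction to $L^1$ convergence of the posterior estimates, Stone's three weight conditions, the cone-covering lemma for the expected-weight bound, and the law-of-large-numbers argument showing the $k_n$-th neighbor distance vanishes when $k_n/n \to 0$), and each step is sound. The one point worth making explicit is that the cone lemma is genuinely a Euclidean argument, so your proof establishes the theorem for $\objectspace \subseteq \mathbb{R}^q$ and for finite spaces (where it collapses to the NN/frequentist case) --- which covers everything the paper uses --- but not for an arbitrary separable metric space as the statement's phrasing might suggest; universal consistency of $k_n$-NN can in fact fail in general separable metric spaces without extra conditions, so the restriction you note is not merely technical.
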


This holds for any distance metric.
In our experiments, we will use the Euclidean distance, and
we will evaluate \knn rules for
$k_n = \log{n}$ (natural logarithm)
and $k_n = \log_{10}{n}$.

The ML literature is rich of UC rules and other useful
tools for black-box security; we list some of them in
Appendix~\ref{appendix:more-tools}.

\section{Evaluation on Synthetic Data}
\label{sec:synthetic-experiments}

In this section we evaluate our estimates on several
synthetic systems for which the channel matrix is known.
For each system, we sample $n$ examples from its distribution,
	and then compute the estimate on the whole
	object space as in \autoref{eq:error};
	this is possible because \objectspace is finite.
Since for synthetic data we know the real Bayes risk,
we can measure how many examples are required for
the convergence of each estimate.
We do this as follows:
let $R^f_n$ be an estimate of \bayesrisk, trained on
a dataset of $n$ examples.
We say the estimate $\delta$-converged to \bayesrisk after $n$ examples if
its relative change from \bayesrisk is smaller than $\delta$:
\begin{equation}
	\frac{\left| R^f_n - \bayesrisk \right|}{\bayesrisk} < \delta \,.
\end{equation}
While relative change has the advantage of taking into account the magnitude
of the compared values, it is not defined when the denominator is $0$;
therefore, when $\bayesrisk \approx 0$ (\autoref{tab:all-systems}),
we verify convergence with the absolute change:
\begin{equation}
	\left| R^f_n - \bayesrisk \right| < \delta \,.
\end{equation}

\begin{table}
	\centering
	\caption{Synthetic systems.}
	\begin{tabular}{lcccc}
		Name     & Privacy parameter   & $|\secretspace|$ & $|\objectspace|$ & \bayesrisk \\ \toprule		
		Geometric & 1.0 &       100        &       10K        &  $\sim 0$  \\
		Geometric & 0.1 &       100        &       10K        &   0.007    \\
		Geometric & 0.01 &       100        &       10K        &   0.600    \\
		Geometric & 0.2 &       100        &        1K        &   0.364    \\
		Geometric & 0.02 &       100        &       10K        &   0.364    \\
		Geometric & 0.002 &       100        &       100K       &   0.364    \\
		
		Geometric & 2 & 100K & 100K & 0.238 \\
		Geometric & 2 & 100K & 10K & 0.924 \\ %
		
		Multimodal & 1.0 & 100 & 10K  & 0.450\\
		Multimodal & 0.1 & 100 & 10K & 0.456\\
		Multimodal & 0.01 & 100 & 10K & 0.797\\
				
		Spiky & N/A & 2 & 10K & 0\\
		
		Random   & N/A   &       100        &       100        &   0.979\\
		\bottomrule
	\end{tabular}
	\label{tab:all-systems}
\end{table}

The systems used in our experiments are briefly discussed in
this section, and summarized in \autoref{tab:all-systems};
we detail them in Appendix~\ref{appendix:systems}.
A uniform prior is assumed in all cases.

\subsection{Geometric systems}

We first consider systems generated by adding geometric noise to the secret, 
one of the typical mechanisms used to implement differential privacy~\cite{Dwork:06:ICALP}.
Their channel matrix has the following form:
\begin{equation}
\channel_{\secret, \object} = P(o \mid s) =
\lambda \exp\left(- \nu {\mid g(s) - o \mid}\right) \,,
\end{equation}
where $\nu$ is a privacy parameter, $\lambda$ a normalization
factor, and $g$ a function $\secretspace \mapsto \objectspace$;
a detailed description of these systems is given in
in Appendix~\ref{appendix:systems}.

We consider the following three parameters: 
\begin{itemize}
	\item the privacy parameter $\nu$,
	\item the ratio $\nicefrac{|\objectspace|}{|\secretspace|}$, and
	\item the size of the secret space $|\secretspace|$.
\end{itemize}
We vary each of these parameters one at a time, to isolate their effect on the convergence rate. 

\subsubsection{Variation of the privacy parameter $\nu$}

We fix  $|\secretspace|=100$, $|\objectspace|=10$K, and we consider  three cases
$\nu=1.0$, $\nu=0.1$ and $\nu=0.01$. The
results for the estimation of the Bayes risk and the convergence rate 
are illustrated in \autoref{fig:geometric-system-vary-noise} and
\autoref{tab:geometric-system-vary-noise} respectively. 
In the table, results are reported for  convergence level
$\delta \in \{0.1, 0.05, 0.01, 0.005\}$; an ``X'' means a particular estimate did not converge
within 500K examples; a missing row for a certain $\delta$ means no estimate
converged.

The results indicate that
the nearest neighbor methods have a much faster
convergence than the standard frequentist approach,
particularly when dealing with large systems.
The reason is that  geometric systems have a regular behavior with respect to the Euclidean metric,
which can be exploited by NN and \knn to make good predictions
for unseen objects.

\begin{figure*}[h!]
	\centering
	\includegraphics[width=0.65\textwidth]{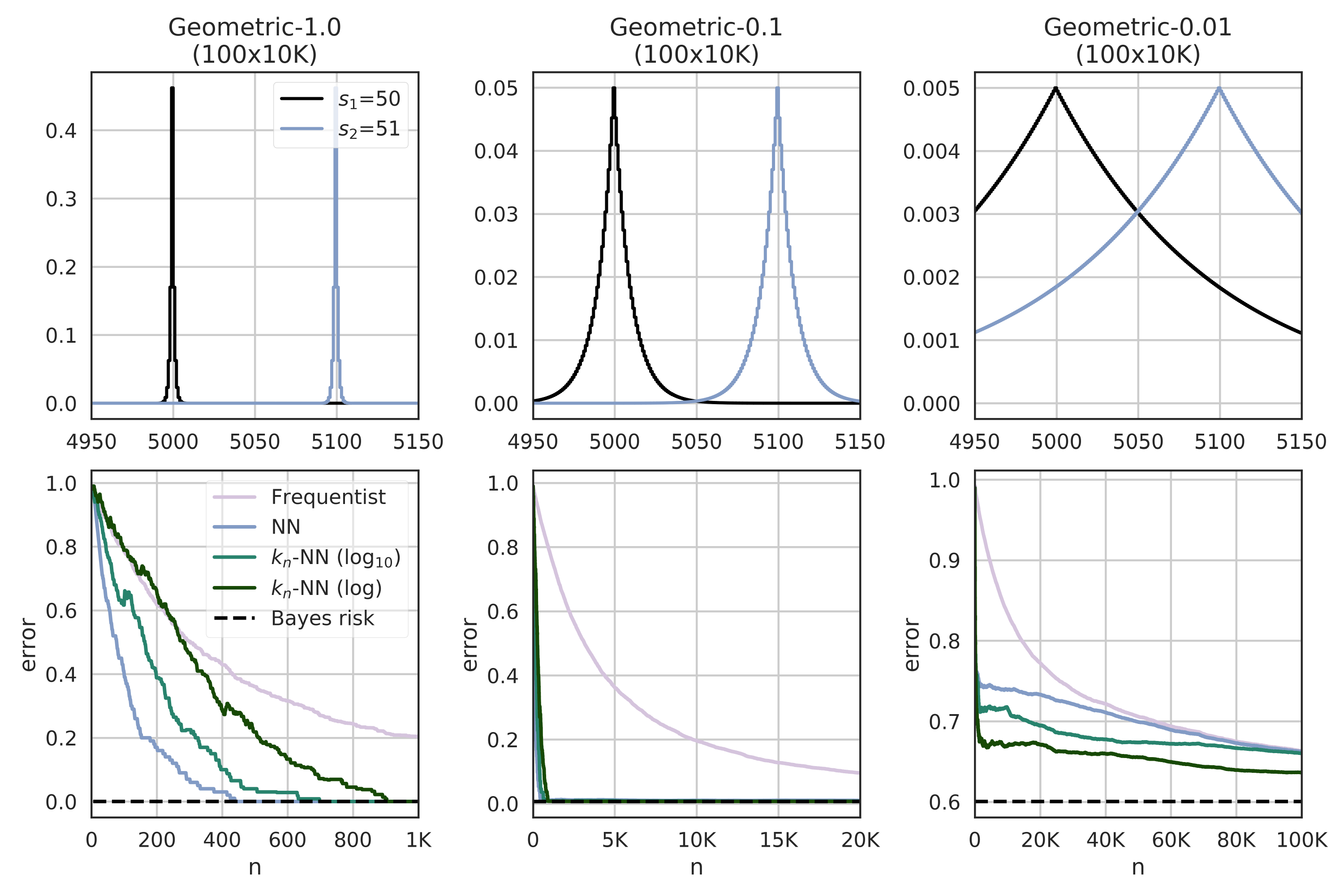}
	\caption{Estimates' convergence for geometric systems when varying their privacy parameter $\nu$.
		The respective distributions are shown in the top figure
		for two adjacent secrets $s_1 \sim s_2$.}
	\label{fig:geometric-system-vary-noise}
\end{figure*}
\begin{table}
	\centering
	\caption{Convergence of the estimates 
		when varying $\nu$, fixed
		$|\secretspace| \times |\objectspace| = 100 \times 10K$}
	\begin{tabular}{lccccc}
		&          &       &    & \multicolumn{2}{c}{\knn}\\
		\cline{5-6}
		System & $\delta$ & Freq. & NN & $\log_{10}$ & $\log$ \\
		\toprule
		
		\multirow{2}{5em}{{\bf Geometric\\$\bm \nu$ = 1.0}}
		& 0.1 & 1\,994 & \textbf{267} & 396 & 679\\
		& 0.05 & 4\,216 & \textbf{325} & 458 & 781\\
		& 0.01 & 19\,828 & \textbf{425} & 633 & 899\\
		& 0.005 & 38\,621 & \textbf{439} & 698 & 904\\
		
		\vspace{-0.3em}\\
		\multirow{2}{5em}{{\bf Geometric \\$\bm \nu$ = 0.1}}
		& 0.1 & 18\,110 & \textbf{269} & 396 & 673\\
		& 0.05 & 35\,016 & \textbf{333} & 458 & 768\\
		& 0.01 & 127\,206 & \textbf{439} & 633 & 899\\
		& 0.005 & 211\,742 & 4\,844 & \textbf{698} & 904\\
		\vspace{-0.3em}\\

		\multirow{2}{5em}{{\bf Geometric \\$\bm \nu$ = 0.01}}
		& 0.1 & 105\,453 & 103\,357 & 99\,852 & \textbf{34\,243}\\
		& 0.05 & 205\,824 & 205\,266 & 205\,263 & \textbf{199\,604}\\
		\bottomrule
	\end{tabular}
	\label{tab:geometric-system-vary-noise}
\end{table}

\subsubsection{Variation of the ratio $\nicefrac{|\objectspace|}{|\secretspace|}$}
Now we fix  $|\secretspace| =100$, 
and we consider  three cases
$\nicefrac{|\objectspace|}{|\secretspace|}=10$, $\nicefrac{|\objectspace|}{|\secretspace|}=100$, and $\nicefrac{|\objectspace|}{|\secretspace|}=1$K.
(Note that we want to keep
the ratio $\nicefrac{\nu}{\Delta_g}$ fixed, see Appendix~\ref{appendix:systems};
as a consequence $\nu$ has to vary: we set $\nu$  to $0.2$, $0.02$, and $0.002$, respectively.)
Results in \autoref{fig:geometric-system-vary-ratio} and
\autoref{tab:geometric-system-vary-ratio}
show how the nearest neighbor methods
become much better than the frequentist approach as $|\objectspace|/|\secretspace|$ increases. 
This is because the larger the object space, the larger the number of unseen objects at the moment of classification,
and the more the frequentist approach has to rely on random guessing.
The  nearest neighbor methods are not that much affected because they can rely on the proximity to outputs already classified.

\begin{figure*}
	\centering
	\includegraphics[width=0.65\textwidth]{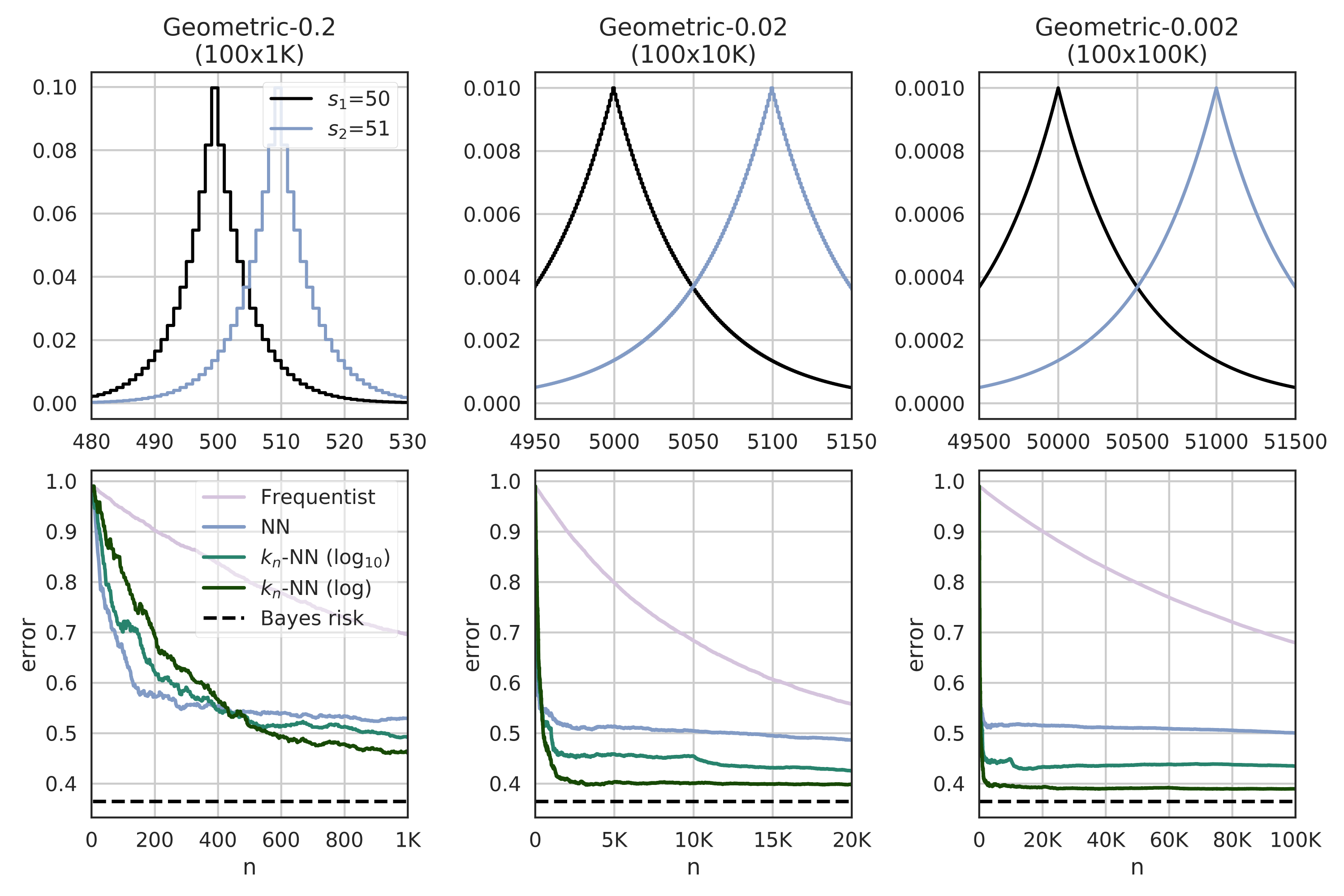}
	\caption{Estimates' convergence for geometric systems when varying the
		ratio $|\objectspace|/|\secretspace|$.
		The respective distributions are shown in the top figure
		for two adjacent secrets $s_1 \sim s_2$.}
	\label{fig:geometric-system-vary-ratio}
\end{figure*}

\begin{table}
	\centering
	\caption{Convergence of the estimates 
	 when varying
		$|\objectspace|/|\secretspace|$.}
	\begin{tabular}{lccccc}
		&          &       &    & \multicolumn{2}{c}{\knn}\\
		\cline{5-6}
		System & $\delta$ & Freq. & NN & $\log_{10}$ & $\log$ \\
		\toprule
		\multirow{3}{5em}{{\bf Geometric 100x1K \\$\bm \nu$ = 0.2}}
		& 0.1 & 8\,679 & 8\,707 & 7\,108 & \textbf{2\,505}\\
		& 0.05 & 14\,823 & 14\,853 & 14\,853 & \textbf{7\,673}\\
		& 0.01 & \textbf{51\,694} & 60\,796 & 60\,796 & 60\,796\\
		& 0.005 & \textbf{71\,469} & \textbf{71\,469} & \textbf{71\,469} & \textbf{71\,469}\\
		
		\vspace{-0.3em}\\

		\multirow{3}{5em}{{\bf Geometric 100x10K \\$\bm \nu$ = 0.02}}
		& 0.1 & 85\,912 & 85\,644 & 71\,003 & \textbf{11\,197}\\
		& 0.05 & 152\,904 & 152\,698 & 151\,153 & \textbf{68\,058}\\
		~\\
		
		\vspace{-0.3em}\\
		\multirow{3}{5em}{{\bf Geometric 100x100K \\$\bm \nu$ = 0.002}}
		& 0.1 & X & X & 413\,974 & \textbf{2\,967}\\
		~\\
		~\\
		\bottomrule
	\end{tabular}
	\label{tab:geometric-system-vary-ratio}
\end{table}

\subsubsection{Case $|\secretspace| \geq |\objectspace|$}
		We fix  $\nu = 2$, and we consider two cases:
		$|\secretspace| = |\objectspace|$
		and $|\secretspace| > |\objectspace|$.
		It should be noted that the formulation of geometric systems prohibits the number of secrets
		to exceed the number of outputs; for this reason, in the system
		$|\secretspace| > |\objectspace|$ some secrets are associated with the
		same distribution over the output space (Appendix~\ref{appendix:systems}).
		
		The results in \autoref{fig:geometric-system-vary-secrets-objects}
		and \autoref{tab:geometric-system-vary-secrets-objects}
		indicate that NN and frequentist are mostly equivalent: this is because
		they both need to observe at least one example for each secret.
		\knn rules, on the other hand, show poor performances, due to
		the fact that they would need at least $k_n$ examples for each
		secret.
		A natural extension of our work is to look at notions of metric
		also in the secret space for improving convergence.

\begin{figure}[h!]
	\centering
	\includegraphics[width=0.85\linewidth]{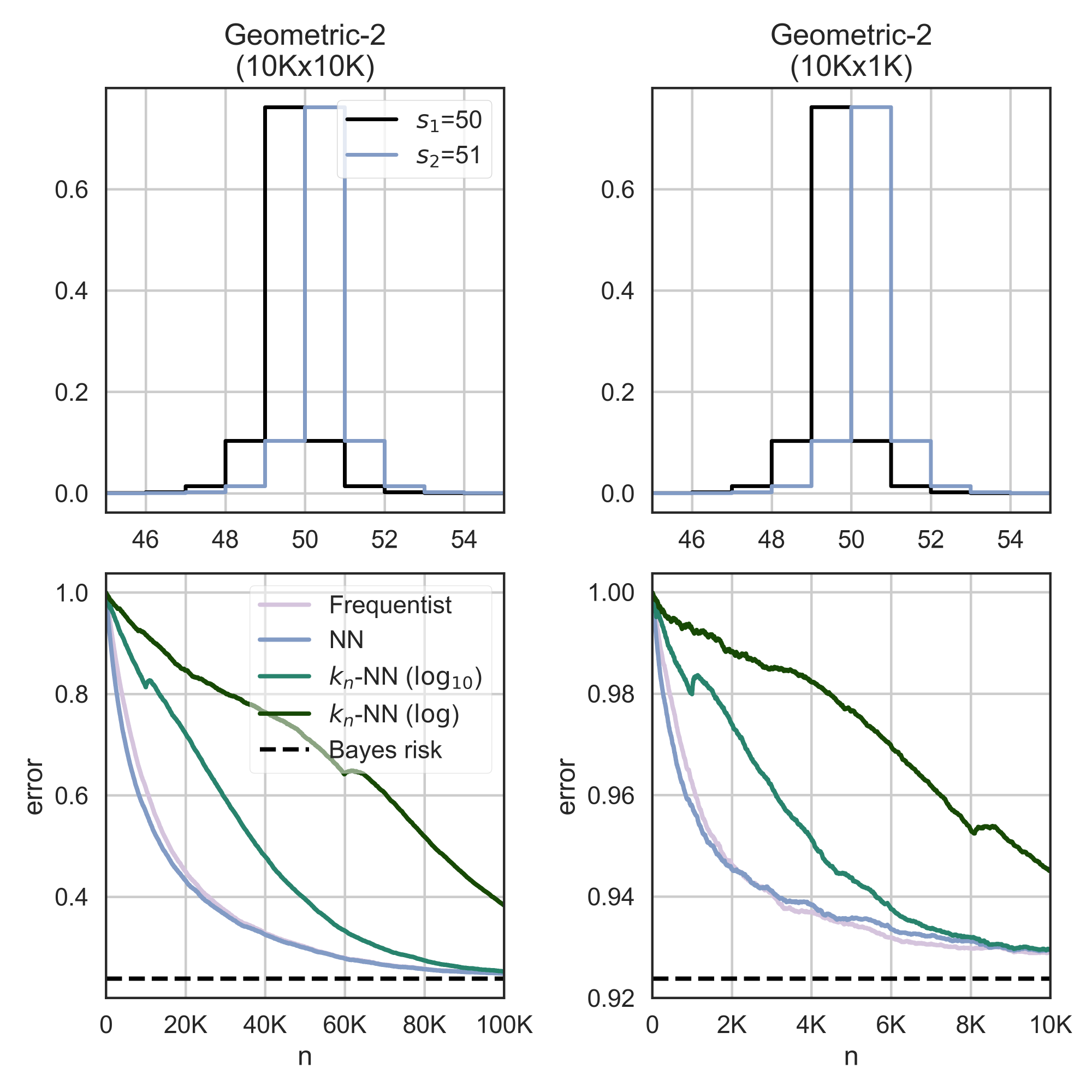}
	\caption{Estimates' convergence for geometric systems when $|\secretspace| \geq |\objectspace|$.
		The distributions are shown in the top figure
		for two adjacent secrets $s_1 \sim s_2$.
		In the case $|\secretspace| > |\objectspace|$ (right) there are
		$10 = \nicefrac{10K}{1K}$ identical distributions that
		coincide on $s_1$, and $10$ identical distributions on $s_2$.}

	\label{fig:geometric-system-vary-secrets-objects}
\end{figure}
	
\begin{table}[h!]
	\centering
	\caption{Convergence of the estimates 
		when 
		$|\secretspace| \geq|\objectspace|$, $\nu = 2$.}	
	\begin{tabular}{lccccc}
		&          &       &    & \multicolumn{2}{c}{\knn}\\
		\cline{5-6}
		System & $\delta$ & Freq. & NN & $\log_{10}$ & $\log$ \\
		\toprule	
		\multirow{3}{5em}{{\bf Geometric 10Kx10K}}
		& 0.1 & 74\,501 & \textbf{73\,085} & 88\,296 & 140\,618\\
		& 0.05 & 95\,500 & \textbf{94\,204} & 107\,707 & 155\,403\\
		& 0.01 & \textbf{137\,099} & 137\,348 & 144\,846 & 192\,014\\
		& 0.005 & \textbf{153\,370} & \textbf{153\,370} & 159\,075 & 203\,363\\
		
		\vspace{-0.3em}\\
		\multirow{3}{5em}{{\bf Geometric 10Kx1K}}
		& 0.1 & \textbf{5} & \textbf{5} & \textbf{5} & \textbf{5}\\
		& 0.05 & 721 & \textbf{514} & 2\,309 & 5\,977\\
		& 0.01 & \textbf{5\,595} & 6\,171 & 7\,330 & 12\,354\\
		& 0.005 & \textbf{10\,770} & 10\,797 & 11\,037 & 14\,575\\
		\bottomrule
	\end{tabular}
	\label{tab:geometric-system-vary-secrets-objects}
\end{table}

\subsection{Multimodal geometric system}
We now evaluate the estimators on systems with a multimodal
distribution.
In particular, we create multimodal geometric systems by
summing two geometric probability distributions, appropriately normalized
and shifted by some parameter.
We provide the details of this distribution in Appendix~\ref{appendix:systems}.

\begin{figure*}[h!]
	\centering
	\includegraphics[width=0.65\linewidth]{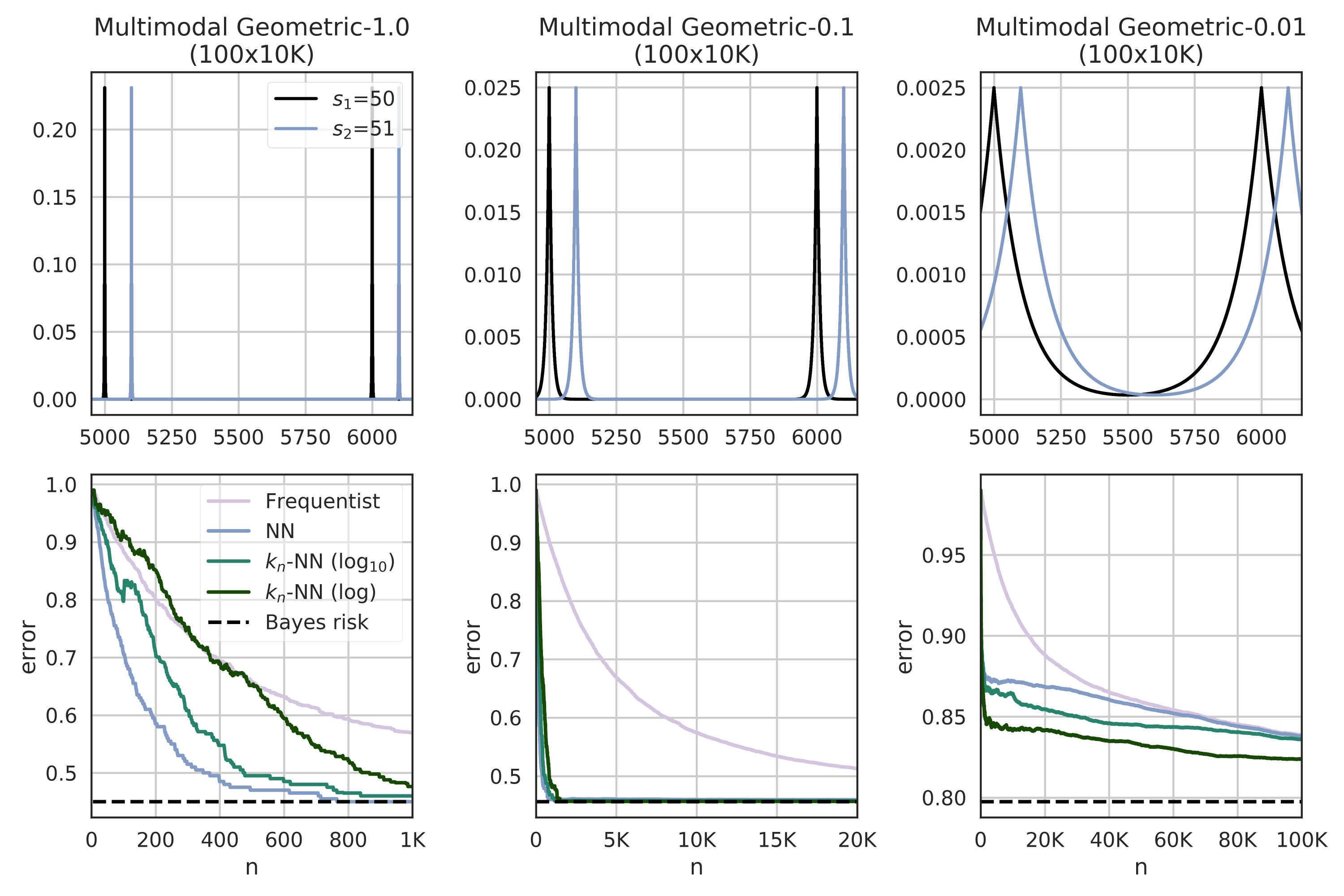}
	\caption{Estimates' convergence for multimodal geometric systems when varying
		the privacy parameter $\nu$.
		The distributions are shown in the top figure
		for two adjacent secrets $s_1 \sim s_2$.}
	\label{fig:multimodal-convergence}
\end{figure*}

\subsubsection{Evaluation}
Results are reported in \autoref{fig:multimodal-convergence}.
As expected, we observe that nearest neighbor rules improve
on the frequentist approach; the reason is that, even for
multimodal distributions, there exists a metric on the outputs
which they can exploit.
Detailed $\delta$-convergence results are
in Appendix~\ref{appendix:convergence-results}.

\subsection{Spiky system: When \knn rules fail}
\label{sec:knn-fail}

Nearest neighbor rules take advantage of the metric
on the object space
to improve their convergence considerably.
However, as a consequence of the NFL theorem,
there exist systems for which 
the frequentist approach outperforms NN and \knn.
Investigating the form of such systems
is important to understand when these
methods fail.

We craft one such system, the Spiky system,
where the metric misleads predictions.
The Spiky system
is such that
neighboring points are associated with different
secret.
This means that NN and \knn rules will tend to
predict the wrong secret, until enough examples are
available.
We detail its construction in Appendix~\ref{appendix:systems}.

\begin{figure}
	\centering
	\includegraphics[width=0.50\linewidth]{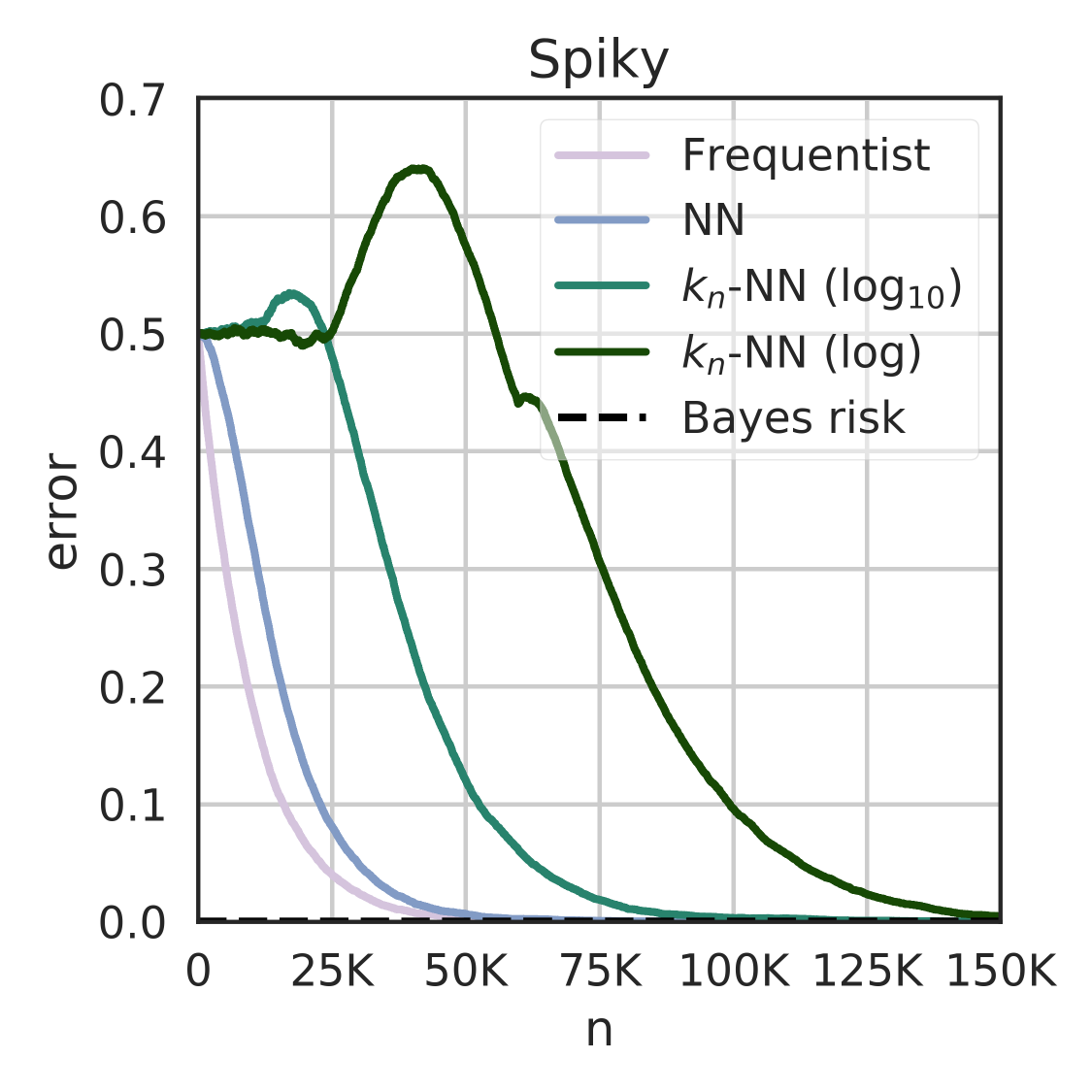}
	\caption{Estimates' convergence for a Spiky system (2x10K).}
	\label{fig:spiky-system}
\end{figure}

\paragraph{Discussion}
We conducted experiments for a Spiky system of size $|\objectspace| = 10K$.
Results in \autoref{fig:spiky-system} confirm the hypothesis:
nearest neighbor rules are misled for this
system.

Interestingly, while the NN estimate keeps decreasing as the number
of examples $n$ increases, there is a certain range of $n$'s where the \knn estimates become worse than random
guessing.
Intuitively, this is because   
when $n$ becomes larger than $|\objectspace|$, all elements in  
$\objectspace$ tend to be covered by the examples. For every $i\in \objectspace$ there are two 
neighbors, $i-1$ and $i+1$, that belong to the class opposite to the one of $i$, so if $k$ is not too small with respect to $n$, 
it is likely  that in the multiset of the $k$ closest neighbors of $i$, the number of $i-1$'s and $i+1$'s exceeds the 
number of $i$'s, which means that $i$ will be misclassified. 
As $n$ increases, however, the ratio between $k$ and the number of $i$'s in the examples tends to decrease 
(because
$\nicefrac{k}{n}\rightarrow 0$ as $n\rightarrow \infty$), hence at some point we will have enough 
$i$'s to win the majority vote in the $k$ neighbors
($i$'s are considered before than $i-1$'s and $i+1$'s, by
the nearest neighbor definition)
so $i$ will not be   misclassified anymore.

Concerning the comparison between the NN and frequentist estimates, we can do it analytically. 
We start by computing the expected error of the NN method 
on the spiky system in terms of the number of training examples $n$.
Let $T^n$ be a training set of examples of size $n$. Given a new object $i$,  let
us consider the NN estimate $r_n(i)$ of $\bayesriskcond$ for  $i$, i.e., 
the expected probability of error in the classification of $i$. 
This is the probability that the element $o$ closest to $i$ that appears in the training set has odd distance  from $i$ (i.e., $d(i,o) = 2\ell +1$, for some natural number $\ell$). Namely it is the probability that:
\begin{itemize}
	\item  $i$ is not in the training data but either ${i+1}$ or ${i-1}$ are, or
	\item $i,  {i\pm1},  {i\pm2}$ are not in the training data but either
	${i+3}$ or ${i-3}$ are, or
	\item \ldots etc.
\end{itemize}
Hence we have:
\begin{align}
\riskcond{i} =&\; P(d(i, o) = 2l+1) =\\
=&\, P(i \notin T^n, {i+1} \in T^n) + P(i \notin T^n, {i-1} \in T^n) + \ldots\\
=&\, 2\cdot \sum_{\ell=0}^{\nicefrac{q}{4}-1} a^{4\ell+1} (1-a),
\end{align}
where $a= (1-\nicefrac{1}{q})^n$  is the probability that an element $e\in\objectspace$ does not occur in any of the $n$ 
examples of the training set. 
(Thus $a^{4\ell+1}$ represents the probability that none of the elements $i, i\pm1,i\pm2,i\pm 2 s$, with $\ell= 2s$, appear in the training set, and  
$1-a$ represents the probability that the element $2s+1$ (resp.  $2s-1$ ) appears in the training set.) 
By using the result of the geometric series 
\begin{equation}
\sum_{t = 0}^{m} a^t = \frac{1-a^{m+1}}{1-a},
\end{equation}
we obtain:
\begin{equation}
r_n(i) =
2 a \frac{1-a^q}{(1+a^2)(1+a)}.
\end{equation}
Since we assume that the distribution on $\objectspace$ is uniform, we have $R^\mathit{NN}_n=r_n(i)$.

We want to study how the error estimate depends on the relative size of the training set with respect to the size of $\objectspace$. Hence,  let $x= \nicefrac{n}{q}$. 
Then we have $a = (1-\nicefrac{1}{q})^{qx}$, which, for large $q$, becomes $a \approx e^{-x}$. 
Therefore:
\begin{equation}
R^\mathit{NN}_x \approx
2 e^{-x} \frac{1-e^{-qx}}{(1+e^{-2x})(1+e^{-x})}.
\end{equation}
It is easy to see that $R^\mathit{NN}_x \rightarrow \nicefrac{1}{2}$ for $x\rightarrow 0$, and $R^\mathit{NN}_x \rightarrow 0$ for $x\rightarrow \infty$, as expected. 

Consider now the frequentist estimate $R^\mathit{Freq}_x$. 
In this case, given an element $i\in\objectspace$, the classification is done correctly if  $i$ appears in the training set. 
Otherwise, we do random guessing, which gives a correct or wrong classification with equal probability. Only the latter case contributes to the 
probability of error, hence the error estimate is half the probability expectation that $i$ does not belong to the training set:
\begin{equation}
R^\mathit{Freq}_x = \frac{1}{2}(1-\frac{1}{q})^n \approx  \frac{1}{2} e^{-x}
\end{equation}
Therefore, $R^\mathit{NN}_x $ is always above $R^\mathit{Freq}_x$.

\subsection{Random System}
\label{subsec:random-system}

In the previous sections, we have seen cases when our methods
greatly outperform the frequentist approach, and a contrived system
example for which they fail.
We now consider a system generated randomly to evaluate
their performances for an ``average'' system.

\paragraph{System description}
We construct
the channel matrix of a Random system
by drawing
its elements from the uniform distribution,
$\channel_{\secret, \object} \leftarrow^{\$} Uni(0,1)$, and
normalizing its rows.

\paragraph{Evaluation}
We consider a Random system with
$|\secretspace| = |\objectspace| = 100$
and count the number of examples required for
$\delta$-convergence, for many $\delta$'s.
\autoref{tab:random-system} reports the results.

\begin{table}[ht]
	\centering
	\caption{Random: examples required for $\delta$-convergence.}
	\renewcommand{\arraystretch}{1.3}
	\begin{tabular}{ccccc}
		&       &    & \multicolumn{2}{c}{\knn}\\
		\cline{4-5}
		$\delta$ & Freq. & NN & $\log_{10}$ & $\log$ \\
		\toprule
		0.05 & \textbf{5} & \textbf{5} & \textbf{5} & \textbf{5}\\
		0.01 & \textbf{82} & 139 & 202 & 500\\
		0.005 & \textbf{10\,070} & \textbf{10\,070} & \textbf{10\,070} & \textbf{10\,070}\\
		\bottomrule
	\end{tabular}
	\label{tab:random-system}
\end{table}
\begin{figure}
	\centering
	\includegraphics[width=0.5\linewidth]{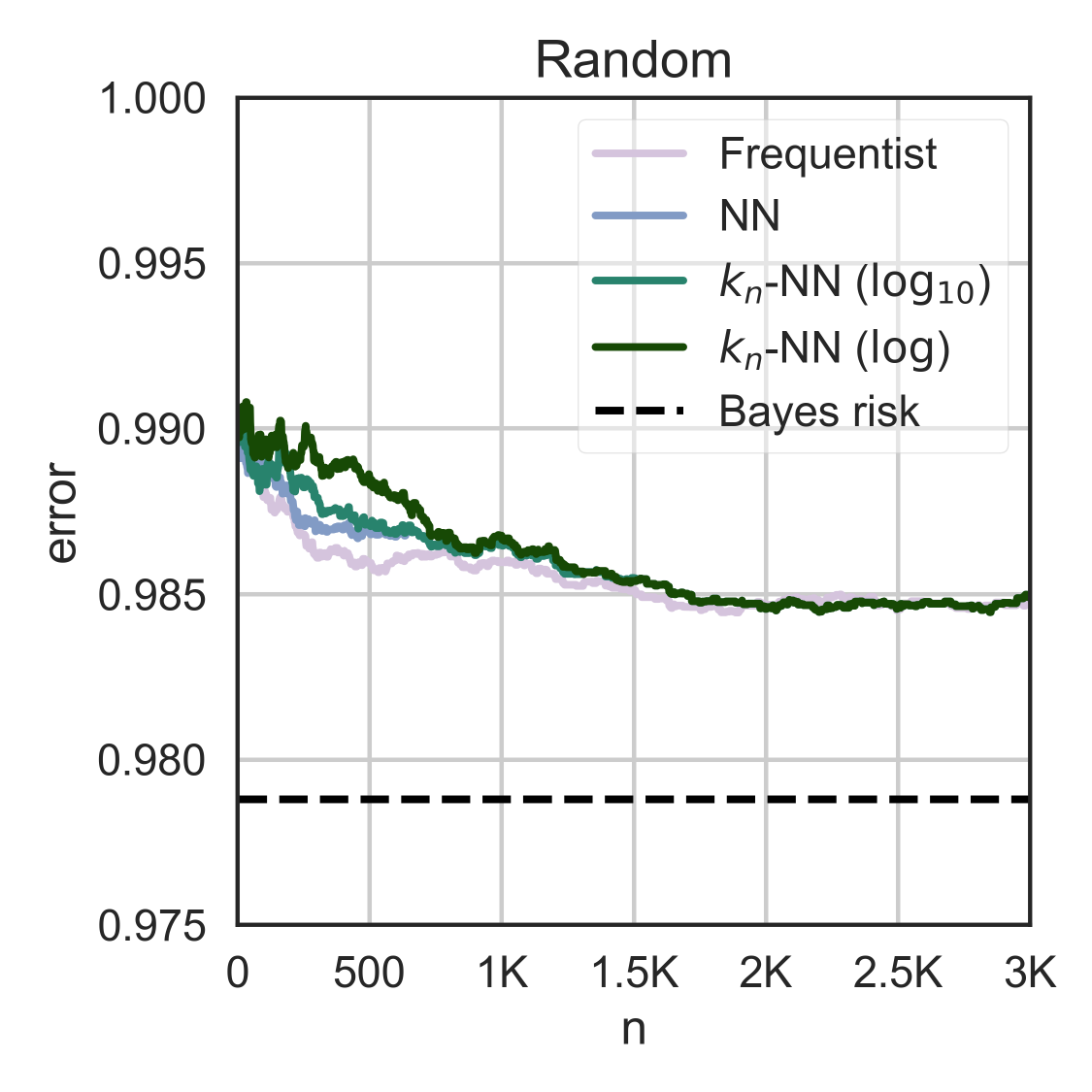}
	\caption{Estimates' convergence for a Random system ($100\times 100$).}
	\label{fig:random-system}
\end{figure}

The frequentist estimate is slightly better than NN and \knn for $\delta=0.01$.
However, for stricter convergence requirements
($\delta=0.001$), all the methods require the same (large) number of examples. 
\autoref{fig:random-system} shows  that indeed the methods begin to converge similarly
already after 1K examples.

\paragraph{Discussion}
\label{sec:synthetic-frequentist-discussion}

Results showed that nearest neighbor estimates require
significantly fewer examples than the frequentist approach
when dealing with medium or large systems;
however, %
they are generally
equivalent to the frequentist approach  in the case of small systems.

To better understand why this is the case, we provide
a crude approximation of the frequentist Bayes risk estimate.

\begin{equation}
R^{\mathit{Freq}}_n \approx \bayesrisk \left(1 - \left(1 - \frac{1}{|\objectspace|}\right)^n\right) +
\errorguesspriors\left(1 - \frac{1}{|\objectspace|}\right)^n \,.
\end{equation}

This approximation, derived and studied in Appendix~\ref{appendix:frequentist},
makes the very strong assumption that all objects are equally likely,
i.e.:
$P(o) = \frac{1}{|\objectspace|}$.
However, this is enough to give us an insight on
the performance of the frequentist approach:
$\left(1 - \frac{1}{|\objectspace|}\right)^n$ is the probability
that some object does not appear within a training set of
size $n$. This
  weighs the value of the frequentist
estimate between the optimal \bayesrisk, used when the
object appears in the training data, and random guessing
\errorguesspriors:
while the estimate
converges asymptotically to
the Bayes risk,
the probability of observing an object
-- often related to the the size $|\objectspace|$,
has a major influence on its convergence rate.

\section{Application to Location Privacy}
\label{sec:gowalla}

\begin{figure}[ht]
	\centering
	\includegraphics[width=0.5\linewidth]{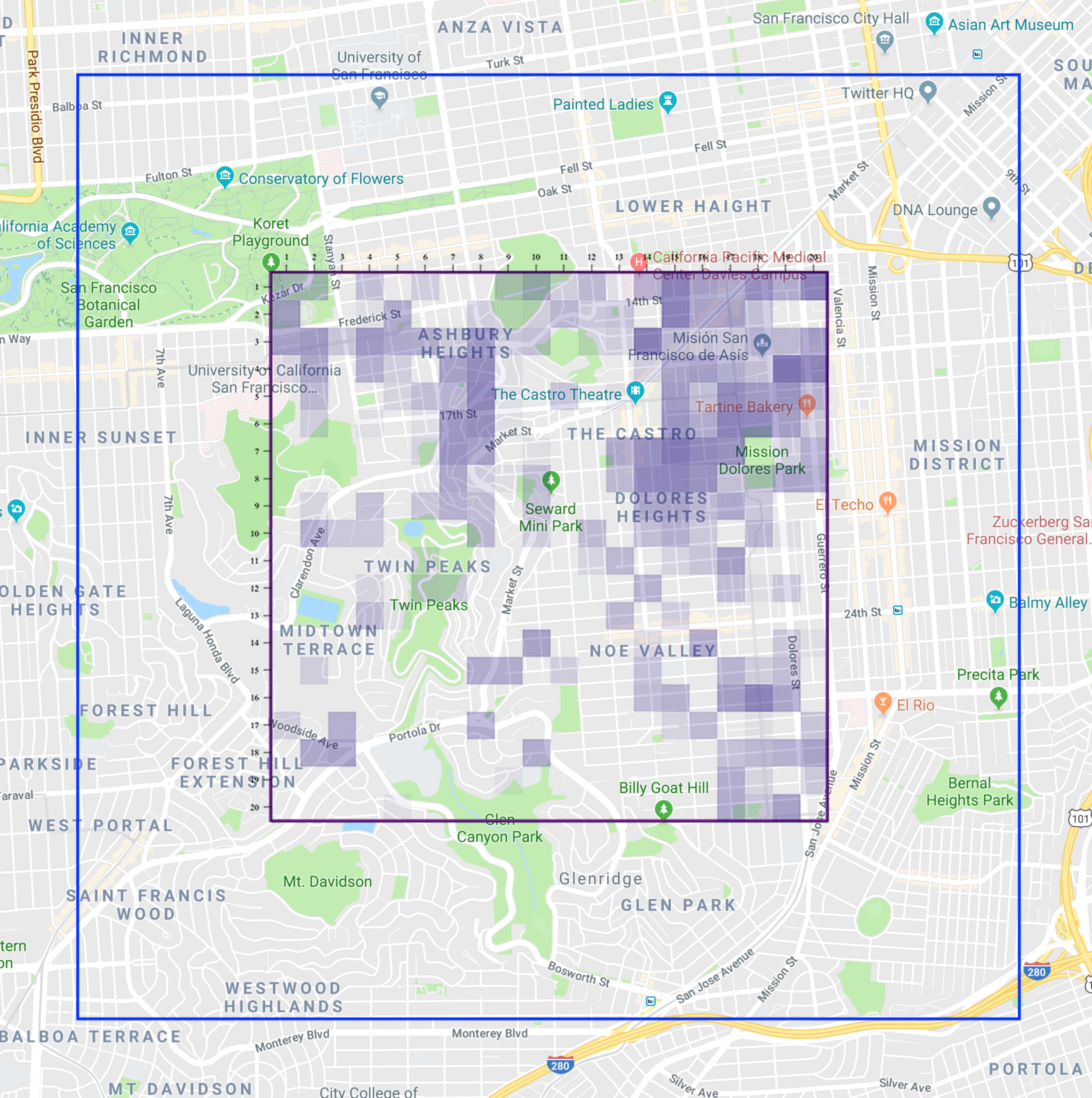}
	\caption{Area of San Francisco considered for the experiments. The input locations correspond to the inner square,
		the output locations to the outer one. The colored cells represent the distribution of the \gowalla checkins.}
	\label{fig:mapSF}
\end{figure}

We show that  \bleau can be successfully applied to estimate the degree of protection provided by  mechanisms 
such as those used in location privacy. 
Since the purpose of this paper is to evaluate the precision of  \bleau, we consider basic mechanisms 
for which  the Bayes risk can also be  computed directly.
Of course, the intended applications of  \bleau are mechanisms or situations where the  Bayes risk \emph{cannot} be computed directly, either because 
this is too complicated, or because of the presence of unknown factors. 
Examples abound;  for instance, the availability of additional information,  like the presence of points of interest (e.g., shops, churches), or geographical characteristics of the area (e.g., roads, lakes) can affect the Bayes risk in ways that are impossible to evaluate formally.  

We will consider the planar  Laplacian and the  planar geometric, 
which are the typical mechanisms used to obtain geo-indistinguishability~\cite{andres2013geo},
and one of the optimal mechanisms proposed by Oya et al.~\cite{Oya:17:CCS} as a refinement 
of the optimal mechanism by Shokri et al.~\cite{Shokri:12:CCS}. 
The construction of
the last
relies on an algorithm that was independently proposed by Blahut and by Arimoto
to solve  the information theory problem of achieving an
optimal trade-off between the  minimization of  the  distortion rate  and the
the mutual information~\cite{Cover:06:BOOK}.
From now on, we shall refer to this as the Blahut-Arimoto mechanism.
Note that the Laplacian is a continuous mechanism
(i.e., its outputs are on a continuous plane);
the other two are discrete.

In these experiments we also deploy the method that \bleau  uses in practice to compute the estimate of the Bayes risk:
 we first split the data into a training set and a hold-out set; then, for an increasing number of examples
$n=1, 2, ...$ we train the classifier on the first $n$ examples on the
training set, and then estimate its error on the hold-out set.%

\subsection{The \gowalla dataset}

We consider real location data from the \gowalla dataset~\cite{Gowalla,Cho:11:SIGKDD},
which contains users' checkins and their geographical location.
We use
a  squared area in San Francisco,
centered in the coordinates (37.755, -122.440), 
and extending for 1.5 Km in each direction. This input area corresponds to the inner (purple) square in \autoref{fig:mapSF}.
We discretize the input using a grid of $20 \times 20$ cells of size $150\times 150$ Sq m;
the secret space $\secretspace$ of the system thus consists of $400$ locations.
The  prior distribution on the secrets is derived from the  \gowalla checkins, and it
is represented in \autoref{fig:mapSF} by the different color intensities
on the input grid. 
The output area is represented in \autoref{fig:mapSF} by the outer (blue) square. It extends 1050 m (7 cells) more than the input square on every side.
We consider a larger area  for the output
because the planar Laplacian and Geometric
naturally expand outside the input square.\footnote{In fact these functions distribute the probability on the infinite plane, but on locations very distant from the origin the probability becomes negligible.}
Since the planar Laplacian is continuous, its  output domain  $\objectspace$ is constituted by all the points of the outer square. 
As for the planar Geometric and the Blahut-Arimoto mechanisms, which are discrete, we divide the output square in a grid of
$340 \times 340$
cells of size $15\times 15$ Sq m;
therefore, $|\objectspace| = 340 \times 340 = 115,600$.

\subsection{Defenses}

The  \textit{planar Geometric} mechanism is   characterized by a channel matrix $C_{s,o}$, representing the conditional probability of reporting the location $o$ when the true location is $s$:
\begin{equation}C_{s,o}= \lambda \exp\left(- \frac{\ln \nu}{100} d(s,o) \right) \,,\end{equation}
where $\nu$ is a parameter controlling the level of noise, $\lambda$ is a normalization factor, and $d$ is the Euclidean distance. %

The \textit{planar Laplacian} is defined by the same equation, except that $o$ belongs to a continuous domain, and the equation defines a probability density function. 

As for the \textit{Blahut-Arimoto}, it is obtained as the result of an iterative algorithm, whose definition can be found in
\cite{Cover:06:BOOK}.

\subsection{Results}
We evaluated the estimates' convergence as a
function of the number of training examples $n$ and for different
values of the noise level: $\nu = \{2, 4, 8\}$.
We randomly split the dataset ($100K$ examples) into
training ($75\%$) and hold-out ($25\%$) sets,
and then evaluated the convergence of the estimators on an increasing number of
training examples, $5, 6, ...75K$.

Results for the geometric noise (\autoref{fig:geometric-gowalla})
indicate faster convergence when $\nu$ is higher (which means less noise and lower Bayes risk), 
in line with the results for the synthetic systems of the previous section. 
In all  cases, the nearest neighbor methods outperform the frequentist one, as we expected given the presence of a large number of outputs.  
\autoref{tab:geometric} shows the number of examples required
to achieve $\delta$-convergence from the Bayes risk.
The symbol ``X'' means we did not achieve a certain level of approximation with 75K examples. 
\begin{figure*}[t]
	\centering
	\includegraphics[width=0.65\textwidth]{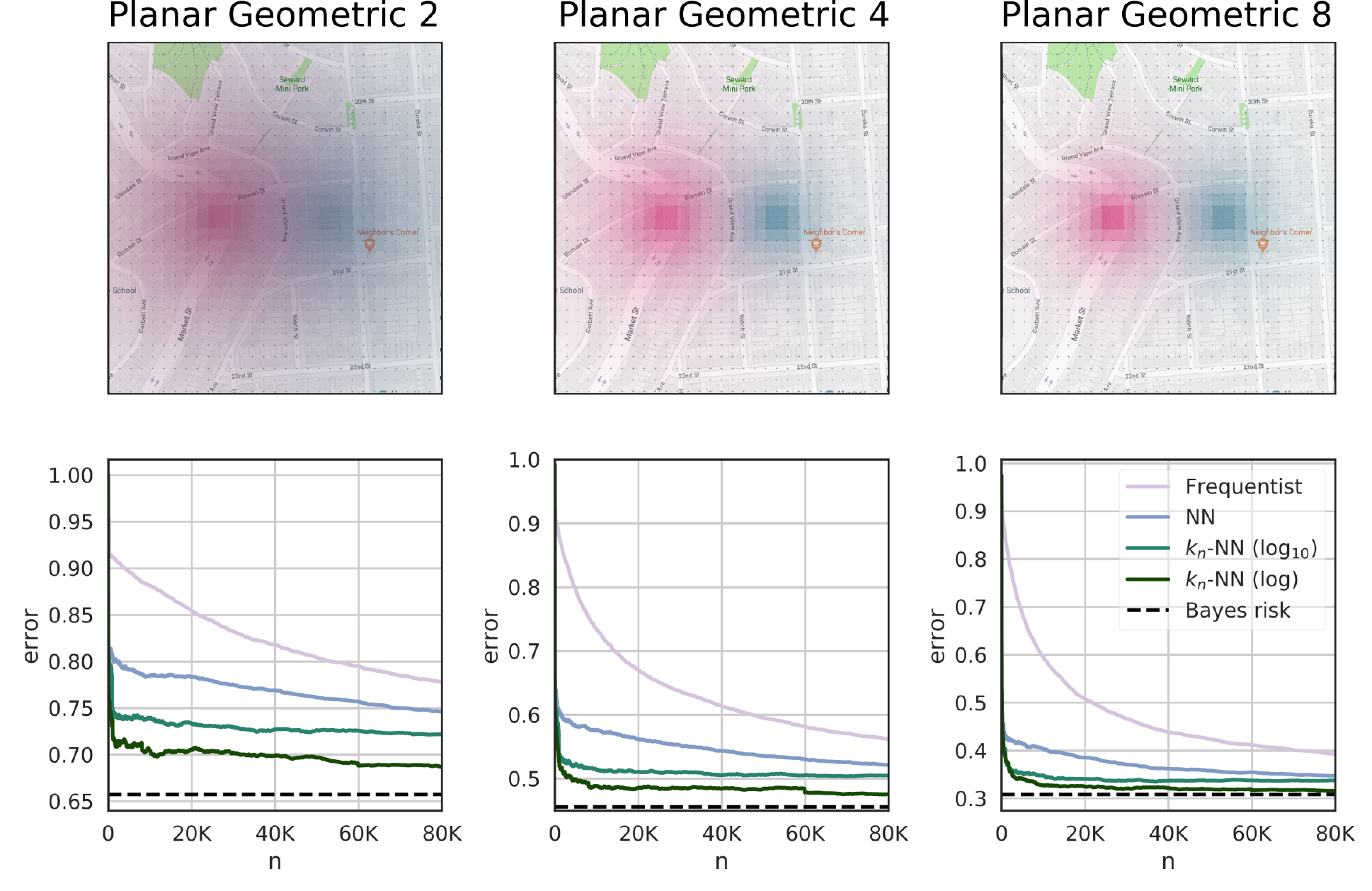}
	\caption{Estimates' convergence speed for the planar Geometric defense applied to the \gowalla dataset, for  $\nu = 2$, $\nu = 4$ and $\nu = 8$, respectively. Above each graph is represented the  distribution of the geometric noise for two adjacent input cells.}
	\label{fig:geometric-gowalla}
\end{figure*}

The corresponding results for the  Laplacian noise are shown in \autoref{fig:geometric-gowalla}  and in \autoref{tab:laplace}.
In this case, the frequentist approach is not applicable, but
the \knn rule can still approximate the Bayes risk for some
approximation levels.

\begin{figure*}[t]
	\centering
	\includegraphics[width=0.65\textwidth]{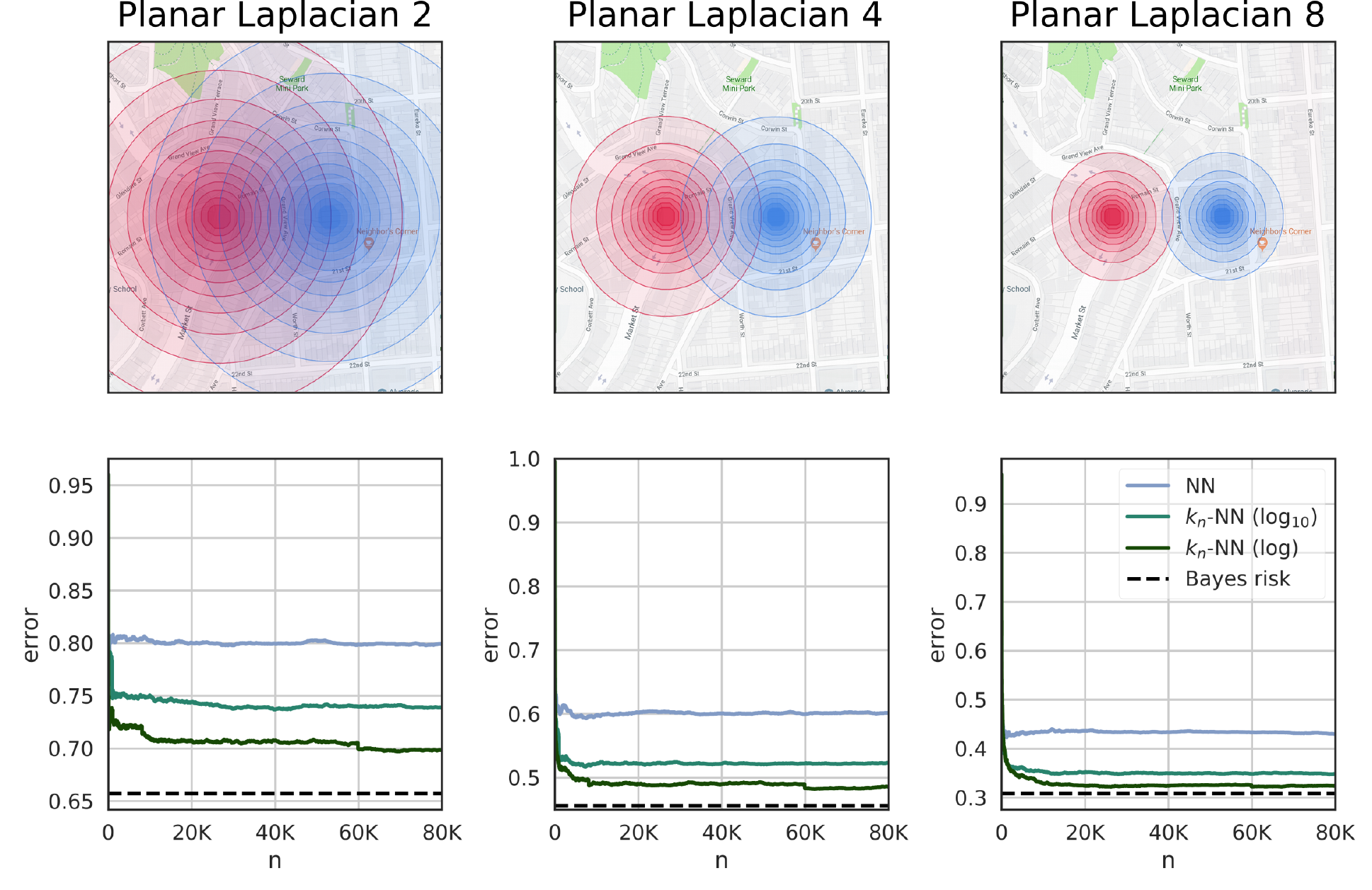}
	\caption{Estimates' convergence speed for the planar Laplacian defense applied to the \gowalla dataset, for  $\nu = 2$, $\nu = 4$ and $\nu = 8$, respectively. Above each graph is represented the  distribution of the geometric noise for two adjacent input cells.}
	\label{fig:laplace-gowalla}
\end{figure*}

The case of the  Blahut-Arimoto mechanism is quite different: 
surprisingly, the output probability concentrates on a small number of locations. For instance, in the case $\nu=2$, 
with 100K sampled pairs we obtained only $19$ different output locations (which reduced to $14$ after we  mapped them on the $20 \times 20$ grid).
Thanks to the small number of actual outputs, all the methods converge very fast. The results are shown in \autoref{fig:arimoto-gowalla}  and in \autoref{tab:arimoto}.

\begin{figure*}
	\centering
	\includegraphics[width=0.65\textwidth]{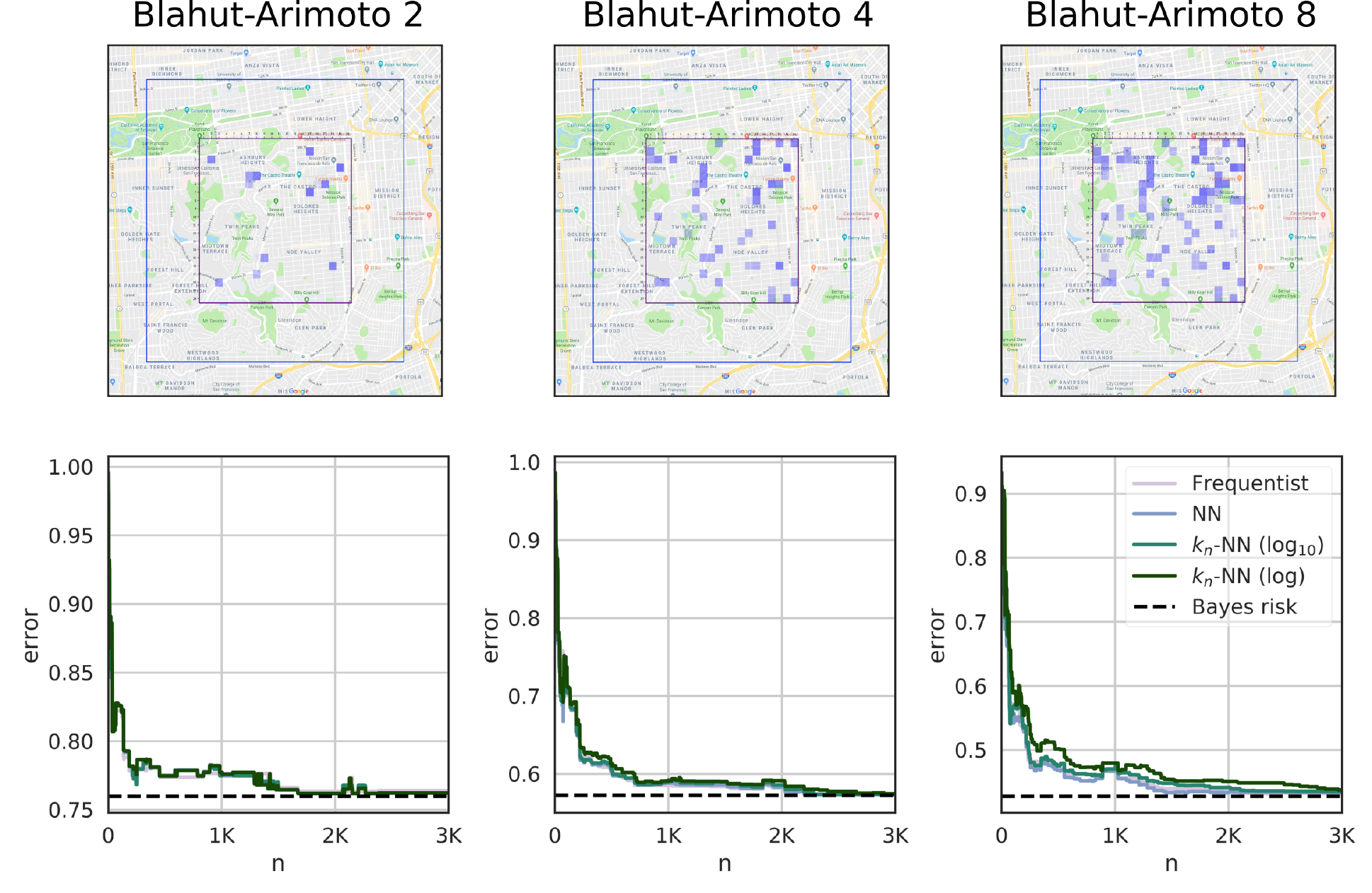}
	\caption{Estimates' convergence speed for the Blahut-Arimoto defense applied to the \gowalla dataset, for  $\nu = 2$, $\nu = 4$ and $\nu = 8$, respectively. Above each graph is represented the  distribution of the output probability as produced by the mechanism. All the outputs with non-null probability turn out to be inside the input square.
		The outputs are points on the $340 \times 340$ grid, but here are  mapped on the coarser $20 \times 20$ grid for the sake of visual clarity. }
	\label{fig:arimoto-gowalla}
\end{figure*}

\begin{table}[!t]
	\centering
	\renewcommand{\arraystretch}{1.3}
	\caption{Convergence 
	 for 
		the Planar Geometric   for various $\nu$.}
	\label{tab:geometric}
	\begin{tabular}{cccccc}
		&          &       &    & \multicolumn{2}{c}{\knn}\\
		\cline{5-6}
		$\nu$ & $\delta$ & frequentist & NN & $\log10$ & $\log$ \\
		\toprule
		2 & 0.1  & X  & X  & 25\,795  & {\bf1\,102} \\
		& 0.05  & X  & X  & X  & {\bf55\,480} \\
		\hline
		4 & 0.1  & X  & X  & 36\,735  & {\bf2\,820} \\
		& 0.05  & X  & X  & X  & {\bf59\,875} \\
		\hline
		8 & 0.1  & X  & X  & 15\,253  & {\bf5\,244} \\
		& 0.05  & X  & X  & X  & {\bf19\,948} \\
		
		\bottomrule
	\end{tabular}
\end{table}

\begin{table}[!t]
	\centering
	\renewcommand{\arraystretch}{1.3}
	\caption{
Convergence 
		for the Planar Laplacian   for various $\nu$.}
	\label{tab:laplace}
	\begin{tabular}{cccccc}
		&          &       &    & \multicolumn{2}{c}{\knn}\\
		\cline{5-6}
		$\nu$ & $\delta$ & frequentist & NN & $\log10$ & $\log$ \\
		\toprule
		2 & 0.1  & N/A  & X  & X  & {\bf259} \\
		\hline
		4 & 0.1  & N/A  & X  & X  & {\bf4\,008} \\
		\hline
		8 & 0.1  & N/A  & X  & X  & {\bf6\,135} \\
		& 0.05  & N/A  & X  & X  & {\bf19\,961} \\
		\bottomrule
	\end{tabular}
\end{table}

\begin{table}[!t]
	\centering
	\renewcommand{\arraystretch}{1.3}
	\caption{Convergence 
		for the Blahut-Arimoto   for various $\nu$.}
	\label{tab:arimoto}
	\begin{tabular}{cccccc}
		&          &       &    & \multicolumn{2}{c}{\knn}\\
		\cline{5-6}
		$\nu$ & $\delta$ & frequentist & NN & $\log10$ & $\log$ \\
		\toprule
		2 & 0.1  & {\bf37}  & {\bf37}  & {\bf37}  & {\bf37} \\
		& 0.05  & {\bf135}  & {\bf135}  & {\bf135}  & {\bf135} \\
		& 0.01  & 1\,671  & 1\,664  & {\bf1\,408}  & {\bf1\,408} \\
		& 0.005  & 6\,179  & 6\,179  & {\bf1\,671}  & {\bf1\,671} \\
		\hline
		4 & 0.1  & {\bf220}  & {\bf220}  & {\bf220}  & 257 \\
		& 0.05  & 503  & {\bf502}  & 509  & 703 \\
		& 0.01  & 2\,029  & {\bf1\,986}  & 2\,055  & 2\,404 \\
		& 0.005  & 2\,197  & {\bf2\,055}  & 2\,280  & 2\,658 \\
		\hline
		8 & 0.1  & {\bf345}  & 401  & 553  & 1\,285 \\
		& 0.05  & 1\,285  & {\bf1\,170}  & 1\,343  & 1\,679 \\
		& 0.01  & 2\,104  & {\bf2\,017}  & 2\,495  & 4\,190 \\
		& 0.005  & {\bf2\,231}  & {\bf2\,231}  & 3\,881  & 6\,121 \\
		\bottomrule
	\end{tabular}
\end{table}

\section{Comparison with \leakiest}\label{sec:comparison}
\leakwatch~\cite{ChothiaKN14:esorics}
and \leakiest~\cite{Chothia:13:CAV} are the major existing black-box leakage
measurement tools, both based on the frequentist approach.
\leakwatch is an extension of \leakiest, which uses the
latter as a subroutine, but \leakiest is more
feature rich:
both tools compute Shannon mutual information (MI) and min-entropy leakage (\meleakage)
on the finite-output case,
but \leakiest can also
perform tests in the continuous output case.
We compare \leakiest with our methods,
for a time side channel
in the RFID chips of the European passports
and for the \gowalla examples of the previous section.

LeakiEst performs two functions: i) a statistical test, detecting
if there is evidence of leakage (here referred to as \textit{leakage evidence test}),
and ii) the estimation of ME (discrete) or MI (discrete and continuous output).
The leakage evidence test generates a ``no leakage'' distribution
via a bootstrapping variant, it estimates the leakage measure on it,
and it compares this estimate with the measure computed on the original
data: if its value is far from the former (w.r.t. some defined confidence
level), then the tool declares there is evidence of leakage.
The second function estimates the distribution with an appropriate
method (frequentist, for finite outputs, Kernel Density Estimation, for continuous
outputs).

\subsection{Time side channel on e-Passports' RFID chips}

\begin{table}[t]
	\renewcommand{\arraystretch}{1.3}
	\caption{Leakage of European  passports
		}
	\label{tab:epassports}
	\centering
	\begin{tabular}{cccc}
		Passport & \leakiest: Leakage evidence & \bleau: \bayesrisk &  \\ \toprule
		British        &           yes            &       0.383        &  \\
		German        &         {\bf no}          &       0.490        &  \\
		Greek         &         {\bf no}          &       0.462        &  \\
		Irish         &            yes            &       0.350        &  \\ \bottomrule
	\end{tabular}\\\vspace{0.1cm}
	Random guessing baseline is $\errorguesspriors=0.5$.
\end{table}

Chothia et al.~\cite{chothia2010traceability} discovered a side-channel attack 
in the way the protocols of various European countries' passports exchanged messages
some years ago. (The protocols have been corrected since then.) 
The problem was that,  upon receiving a message, the e-Passport would 
 first check the Message Authentication Code (MAC),
and only \textit{afterwards} verify the nonce (so to assert the message was not replayed).
Therefore, an attacker who previously intercepted 
a valid message from a legitimate session could replay the message and
detect a difference between  the response time of the victim's passport and any other passport; 
this could be used to track the victim.
As an initial solution, Chothia et al.~\cite{Chothia:13:CAV} proposed to add
padding to the response time, and they used \leakiest to
look for any evidence of leakage after such a defense.

We compared \bleau and \leakiest on
the padded timing data~\cite{epassports.php}.
The secret space contains answers to
the binary question: ``is this the same passport?'';
the dataset is balanced, hence $\errorguesspriors = 0.5$.
We make this comparison on the basis that,
if \leakiest detects no leakage, then the Bayes risk should
be maximum: no leakage happens if and only if
$\bayesrisk = \errorguesspriors$.
We compute ME from the Bayes risk as:
\begin{equation}\meleakage := -\log_2 (1-\errorguesspriors) + \log_2 (1-\bayesrisk) \,.\end{equation}

For \bleau, we  randomly split  the data into training ($75\%$)
and hold-out sets, and then estimated \bayesrisk
on the latter;
we repeated this for $100$ different random
initialization seeds, and averaged the estimates.
Results in \autoref{tab:epassports} show
two cases where \leakiest did not find enough evidence 
of leakage, while \bleau indicates non-negligible leakage.
Note that, because \bleau's results are based on an actual
classifier, they implicitly demonstrate there exists an attack
that succeeds with accuracy 51\% and 54\%.
We attribute this discrepancy between the tools to the fact that the dataset
is small ($\approx$1K examples), and \leakiest may not find
enough evidence to reject the hypothesis of
``no leakage''; indeed, \leakiest sets a fairly high
standard for this decision (95\% confidence interval).

\subsection{\gowalla dataset}

\begin{table}[!t]
	\renewcommand{\arraystretch}{1.3}
	\caption{Estimated leakage of privacy mechanisms on \gowalla data
		}
	\label{tab:leakiest-gowalla}
	\centering
	\begin{tabular}{cccccc}
				&         & \multicolumn{2}{c}{\leakiest} & &\\
	Mechanism   &  $\nu $ & L.E. & \meleakage & {\bleau}: \meleakage  &  True \meleakage \\	\toprule
	B.-Arimoto  &   2   &    no* & 1.481         & 1.479                   & 1.501     \\
				&   4   &    no* & 2.305         & 2.310                   & 2.304           \\
				&   8   &    no* & 2.738         & 2.746                   & 2.738           \\
	Geometric   &   2   &    {\bf no}  & 2.585         & 1.862                   & 1.988               \\
				&   4   &    {\bf no}  & 2.859         & 2.591                   & 2.638               \\
				&   8   &    {\bf no}  & 3.105         & 2.983                   & 2.996           \\
\end{tabular}\\
\vspace{0.1cm}
\begin{tabular}{cccccc}
	Mechanism & $\nu$ & \leakiest: L.E.  & \bleau: \meleakage  & True \meleakage \\ \toprule
	Laplacian &   2   &  {\bf no}  & 1.802   &  1.987  \\
			  &   4   &  {\bf no}  & 2.550   &  2.631                  \\
			  &   8   &  {\bf no}  & 2.970   &  3.003                  \\ \bottomrule
	\end{tabular}\\\vspace{0.1cm}
	L.E.: leakage evidence test
\end{table}

We compare \bleau and   \leakiest on the location privacy mechanisms (\autoref{sec:gowalla}):
Blahut-Arimoto,  planar Geometric, and planar Laplacian. 
The main interest is to verify whether the advantage of \bleau w.r.t.
the frequentist approach, which we observed for large output spaces,
translates into an advantage also  w.r.t. \leakiest. 
For the first two mechanisms we also compare the \meleakage estimates.
For the Laplacian case (continuous), we only use \leakiest's leakage evidence test.

We run \bleau  and \leakiest on the datasets as in \autoref{sec:gowalla}.
Results in \autoref{tab:leakiest-gowalla} show that, in the cases of
planar Geometric and Laplacian distributions, \leakiest does not detect
any leakage (the tool reports ``Too small sample size'');
furthermore, the \meleakage estimates it provides for the
planar Geometric distribution are far from their true values.
\bleau, however, is able to produce more reliable estimates.

The Blahut-Arimoto results are more interesting:
because of the  small number of actual outputs,
the \meleakage estimates of \bleau and \leakiest perform
equally well.
However,
even in this case \leakiest's
leakage evidence test reports
``Too small sample size''.
We think the reason is that \leakiest takes into account the
declared
size of the object space
rather then the effective number of observed individual outputs;
this problem should be easy to fix by inferring
the output size from the examples (this is the meaning of the ``*'' in \autoref{tab:leakiest-gowalla}).

\section{Conclusion and Future Work}

We showed that the black-box leakage of a system, measured
until now with classical statistics paradigms (\textit{frequentist} approach),
can be effectively estimated via ML techniques.
We considered a set of such techniques based on the nearest neighbor
principle (i.e., close observations should be assigned the same
secret), and evaluated them thoroughly on synthetic and real-world
data.
This allows to tackle problems that were impractical until
now;
furthermore, it sets a new paradigm in black-box security:
thanks to an equivalence
between ML and black-box
leakage estimation, many results from the ML theory can
be now imported into this practice (and vice versa).

Empirical evidence shows that
the nearest neighbor techniques we introduce
excel
whenever there is a notion of metric they can exploit in the output space:
whereas for unseen observations the frequentist approach needs to take
a guess,
nearest neighbor methods can
use the information of neighboring observations.
We also observe that whenever the output distribution
is irregular,
they are equivalent to the frequentist
approach, but for maliciously crafted systems
they can be misled.
Even in those cases, however, we remark that asymptotically
they are equivalent to the frequentist approach,
thanks to their universal consistency
property.

We also indicated that, as a consequence of the No
Free Lunch (NFL) theorem in ML, no estimate
can guarantee optimal convergence.
We therefore proposed
\bleau, a combination of frequentist and nearest neighbor
rules, which runs all these techniques on a
system, and selects the estimate that converges
faster.
We expect this work will inspire researchers
to explore new leakage estimators from the ML literature;
in particular, we showed that any ``universally
consistent'' ML rule can be used to estimate
the leakage of a system.
Future work may focus on other rules from which
one can obtain universal consistency
(e.g., Support Vector Machine (SVM) and neural networks);
we discuss this further in Appendix~\ref{appendix:more-tools}.

A crucial advantage of the ML formulation, as opposed
to the classical approach, is
that it gives immediate guarantees for systems with a continuous
output space.
Future work may extend this to systems
with continuous secret space, which in ML terms would
be formalized as regression (as opposed to the classification
setting we considered here).

A current limitation of our methods is that
they do not provide confidence intervals.
We leave this as an open problem.
We remark, however, that for continuous systems
it is not possible
to provide confidence intervals
(or to prove convergence rates) under our weak assumptions~\cite{devroye2013probabilistic};
this constraint applies to any leakage
estimation method.

We reiterate, however, the great advantage of
ML methods: they allow tackling systems for which
until now we could not measure security, with a
strongly reduced number of examples.

\ifanonymous
\else

\ifCLASSOPTIONcompsoc
  \section*{Acknowledgments}
\else
  \section*{Acknowledgment}
\fi

Giovanni Cherubin has been partially supported by an EcoCloud grant.
The work of Konstantinos Chatzikokolakis and Catuscia Palamidessi was partially supported by the 
ANR project REPAS. 

We are very thankful to Marco Stronati, who was initially
involved in this project, and without whom the
authors would have not
started working together.
We are grateful to Tom Chothia and Yusuke Kawamoto, for their 
help to understand \leakiest.
We also thank Fabrizio Biondi for useful discussion. 

This work began as a research visit whose (secondary) goal
was for some of the authors and Marco to climb in Fontainebleau, France.
The trip to the magic land of Fontainebleau never happened --
although climbing has been a fundamental aspect of the collaboration;
the name \bleau is
to honor this unfulfilled dream.
We hope one day the group will
reunite,
and finally climb there together.

\fi

\bibliographystyle{IEEEtran}
\bibliography{IEEEabrv,bibliography,biblio_cat}

\appendices

\section{Additional tools from ML}
\label{appendix:more-tools}

The ML literature can offer several more tools to
black-box security.
We now enumerate additional UC rules, a lower bound
of the Bayes risk,
and a general way of obtaining estimates that
converge from below.

The family of UC rules is fairly large.
An overview of them is by Devroye et al.~\cite{devroye2013probabilistic},
who, in addition to nearest neighbor methods,
report histogram rules and kinds of neural networks;
these are UC under requirements on their parameters.
Steinwart proved that Support Vector Machine (SVM) is
also UC for some parameter choices,
in the case $|\secretspace|=2$~\cite{steinwart2002support};
to the best of our knowledge, attempts to construct an SVM that is
UC when $|\secretspace| > 2$ have failed so far (e.g., \cite{glasmachers2010universal})

In applications with
strict security requirements, a (pessimistic) lower bound of the
Bayes risk may be desirable.
From a result by
Cover and Hart
one can derive a lower bound on the Bayes risk based
on the NN error, $R^{NN}$~\cite{cover1967nearest}:
as $n \rightarrow \infty$:
\begin{equation}\frac{|\secretspace|-1}{|\secretspace|} \left(1-\sqrt{1-\frac{|\secretspace|}{|\secretspace|-1}R^{NN}}\right) \leq \bayesrisk \,.\end{equation}

This was used as the basis for measuring the black-box leakage of
website fingerprinting defenses~\cite{cherubin2017bayes}.

Finally, one may obtain estimators that converge
to the Bayes risk in expectation {\em from below},
for example, by estimating
the error of a \knn rule on its training set~\cite{fukunaga1987bayes,cherubin2017bayes}.

\section{Description of the synthetic systems}
\label{appendix:systems}

\subsection{Geometric system}
Geometric systems are typical in differential privacy and are obtained by adding negative exponential noise to the result of a query. 
The reason is that the property of DP is expressed in terms of a factor between the probability of a reported answer, and that of its immediate neighbor. 
A similar construction holds for the geometric mechanism implementing geo-indistinguishability. In that case the noise is added to the real location to report an obfuscated location. 
Here we give an abstract definition of a geometric system, in terms of secrets (e.g., result of a query / real location) and observables (e.g., reported answer / reported location).

Let $\secretspace$ and $\objectspace$ be sets of consecutive natural numbers,  with the standard notion of distance.
Two numbers  $s,s'\in \secretspace$ are called \emph{adjacent} if $s=s'+1$ or  $s'= s+1$.

Let $\nu$ be a real non-negative number and consider a 
function $g: \secretspace \mapsto \objectspace$.
After adding negative exponential noise to the output of $g$, the resulting geometric system is described by 
the following channel matrix:
\begin{equation}
\channel_{\secret, \object} = P(o \mid s) =
\lambda \exp\left(- \nu {\mid g(s) - o \mid}\right) \,,
\end{equation}
where $\lambda$ is a normalizing factor.
Note that the privacy level is defined by $\nicefrac{\nu}{\Delta_g}$, 
where $\Delta_g$ is the sensitivity of $g$:
\begin{equation}\Delta_g = \max_{\secret_1\sim\secret_2\in \secretspace}(g(\secret_1)-g(\secret_2)) \,,\end{equation}
where $\secret_1\sim\secret_2$ means $\secret_1$ and $\secret_2$ are adjacent.
Now let $\secretspace = \{1,\ldots, w\}$, $\objectspace = \{1, ..., w'\}$,
we select $g$ to be $g(s) = s\cdot w'/w$.
We define
\begin{equation}
\lambda =  \begin{cases}
e^\nu/(e^\nu + 1) \quad & \mbox{if $o=1$ or $o=w'$}\\
(e^\nu - 1)/(e^\nu + 1) \quad & \mbox{otherwise}\,,
\end{cases}
\end{equation}
so to truncate the distribution at its boundaries.

This definition of Geometric system prohibits the case
$|S|>|O|$.
To consider such case, we generate a repeated geometric
channel matrix, such that
\begin{equation}
\channel'_{\secret, \object} = \channel_{\secret', \object} \quad \secret' = \secret \bmod |O| \,,
\end{equation}
where $\channel_{\secret, \object}$ is the geometric
channel matrix described above.

\subsection{Multimodal geometric system}
We construct a multimodal distribution as the weighted sum 
of two geometric distributions, shifted by some shift
parameter.
Let $\channel_{\secret, \object}$ be a geometric
channel matrix.
The respective multimodal geometric channel, for shift
parameter $\sigma$, is:
\begin{equation}\channel^M_{\secret, \object} = w_1\channel_{\secret, \object} +
	w_2\channel_{\secret + 2\sigma, \object} \,.\end{equation}

In experiments, we used $\sigma=5$ and weights
$w_1=w_2=0.5$.

\subsection{Spiky system}
Consider an observation space constituted of $q$ consecutive integer numbers 
$\objectspace = \{0, ..., q-1\}$, for some even positive integer $q$,
and  secrets space $|\secretspace| = 2$. 
Assume that $\objectspace$  is a ring with the operations $+$ and $-$ 
defined as the sum  and the difference modulo $q$, respectively, 
and  consider the distance on $\objectspace$ defined as:
$d(i,j)=|i-j|$. (Note that $(\objectspace,d)$ is a ``circular'' structure,
that is, $d(q-1,0)=1$.)
The Spiky system has uniform prior,
and channel matrix constructed as follows:
\begin{equation}
\channel_{\secret, \object} =
\begin{bmatrix}
\nicefrac{2}{q} & 0 & \nicefrac{2}{q} & \dots & 0\\
0 & \nicefrac{2}{q} & 0 & \dots & \nicefrac{2}{q}
\end{bmatrix} \,.
\end{equation}

\section{Detailed convergence results}
\label{appendix:convergence-results}

Convergence for multimodal geometric systems,
when varying $\nu$ and for fixed
$|\secretspace| \times |\objectspace| = 100 \times 10K$.

\begin{table}[ht!]
	\centering
	\begin{tabular}{lccccc}
		&          &       &    & \multicolumn{2}{c}{\knn}\\
		\cline{5-6}
		System & $\delta$ & Freq. & NN & $\log_{10}$ & $\log$ \\
		\toprule

		\multirow{3}{5em}{{\bf Multimodal\\$\bm \nu$ = 1.0}}
		& 0.1 & 3\,008 & \textbf{369} & 478 & 897\\
		& 0.05 & 5\,938 & \textbf{495} & 754 & 1\,267\\
		& 0.01 & 26\,634 & \textbf{765} & 1\,166 & 1\,487\\
		& 0.005 & 52\,081 & \textbf{765} & 1\,166 & 1\,487\\

		\vspace{-0.3em}\\
		\multirow{3}{5em}{{\bf Multimodal \\$\bm \nu$ = 0.1}}
		& 0.1 & 24\,453 & \textbf{398} & 554 & 821\\
		& 0.05 & 44\,715 & \textbf{568} & 754 & 1\,175\\
		& 0.01 & 149\,244 & 4\,842 & \textbf{1\,166} & 1\,487\\
		& 0.005 & 226\,947 & 79\,712 & \textbf{1\,166} & 1\,487\\
		
		\vspace{-0.3em}\\
		\multirow{3}{5em}{{\bf Multimodal \\$\bm \nu$ = 0.01}}
		& 0.1 & 27\,489 & 753 & 900 & \textbf{381}\\
		& 0.05 & 103\,374 & 101\,664 & 92\,181 & \textbf{31\,452}\\
		~\\
		
		\bottomrule
	\end{tabular}
\end{table}

Detailed convergence results for a Spiky system of size
$|\secretspace|=2$ and $|\objectspace|=10K$.

\begin{table}[ht!]
	\centering
	\begin{tabular}{lcccc}
		&          &       &    \multicolumn{2}{c}{\knn}\\
		\cline{4-5}
		$\delta$ & Freq. & NN & $\log_{10}$ & $\log$ \\
		\toprule
		
		0.1 & \textbf{15\,953} & 22\,801 & 52\,515 & 99\,325\\
		0.05 & \textbf{22\,908} & 29\,863 & 62\,325 & 112\,467\\
		0.01 & \textbf{38\,119} & 44\,841 & 81\,925 & 137\,969\\
		0.005 & \textbf{44\,853} & 51\,683 & 91\,661 & 147\,593\\
		\bottomrule
	\end{tabular}
\end{table}

\section{Uniform system}

We measured convergence of the methods for a uniform system;
this system is constructed so that all secret-object examples
are equally likely, that is $\mu(s, o) = \mu(s', o')$
for all $s, o \in \examplespace$.
The Bayes risk in this case is $\bayesrisk = 1 - \nicefrac{1}{|\secretspace|}$.

\autoref{fig:uniform} shows that even in this case all rules
are equivalent.
Indeed, because the system leaks nothing about its secrets,
all the estimators need to randomly guess; but because for this
system the Bayes risk is identical to random guessing error
($\bayesrisk = 1 - \nicefrac{1}{|\secretspace|} = \errorguesspriors$),
all the estimators converge immediately to its true value.

\begin{figure}[!t]
	\centering
	\includegraphics[width=0.25\textwidth]{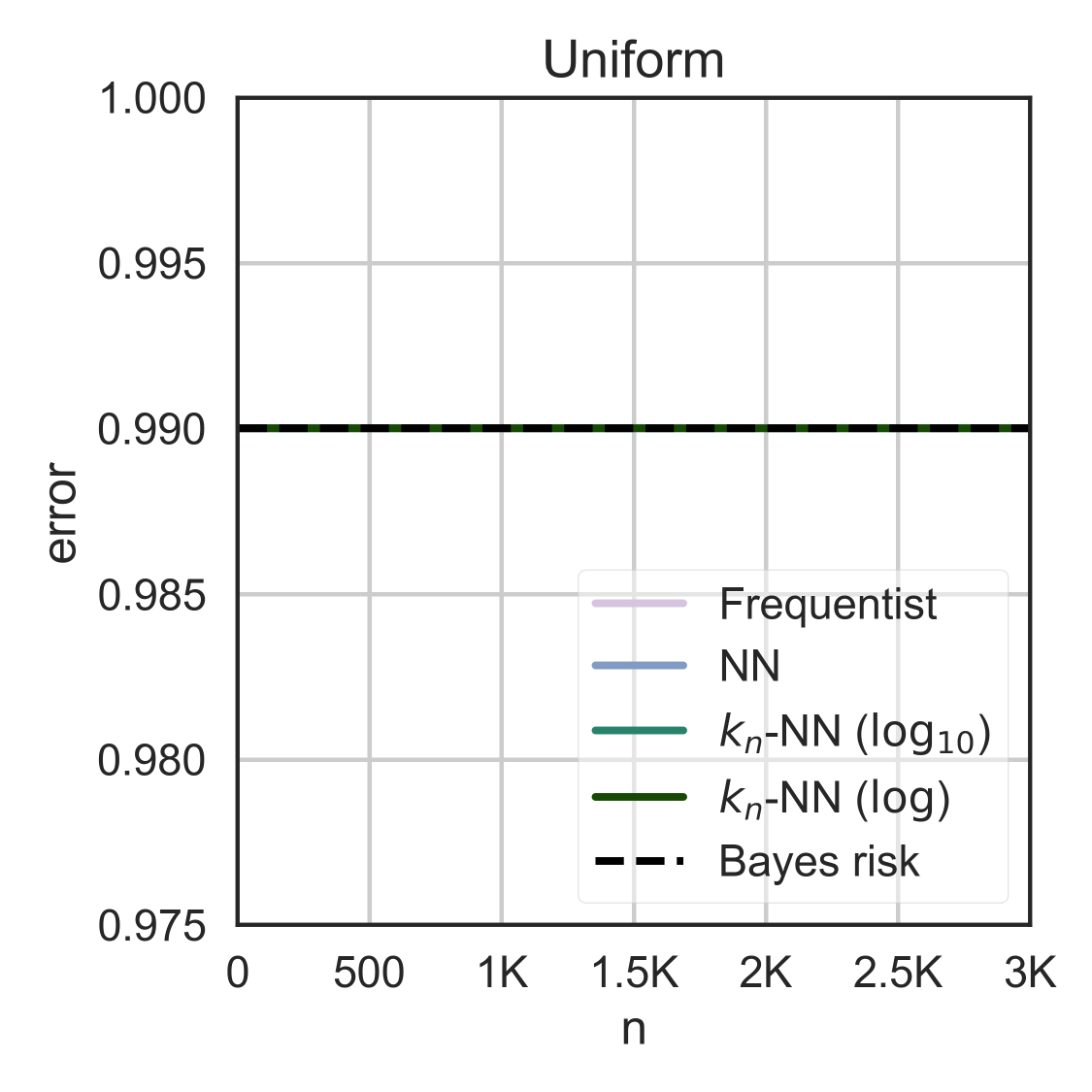}
	\caption{Convergence for a Uniform system of size $100\times100$.}
	\label{fig:uniform}
\end{figure}

\section{Approximation of the frequentist estimate}
\label{appendix:frequentist}

\begin{figure}[!t]
	\centering
	\includegraphics[width=0.225\textwidth]{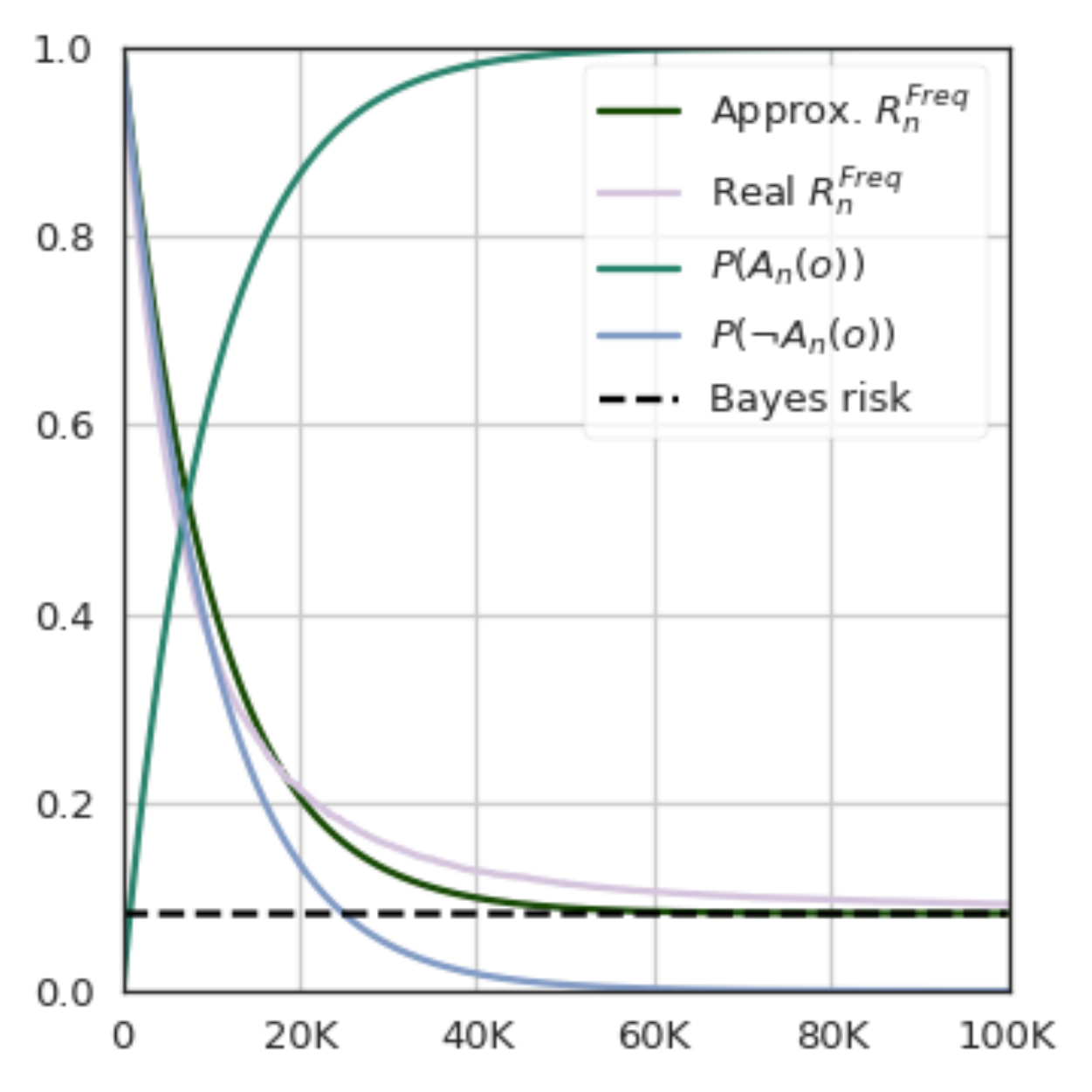}
	\caption{Approximation of the frequentist estimate as $n$ grows for $\bayesrisk \approx 0.08$, $|O|=10K$, and $|S|=1K$;
		the approximation is compared with the real frequentist estimate $R^{Freq}_n$.}
	\label{fig:frequentist-approx}
\end{figure}

To better understand the behavior of the frequentist approach
for observations that were not in the training data,
we derive a crude approximation of this estimate in terms
of the size of training data $n$.
The approximation makes the following assumptions:
\begin{enumerate}
	\item each observation $\object \in \objectspace$ is equally 
	likely to
	appear in training data (i.e., $P(\object) = 1 - \frac{1}{|\objectspace|}$);
	\item if an observation appears in the training data, the frequentist
	approach outputs the secret minimizing the Bayes risk;
	\item the frequentist estimate knows the real priors $\priors$;
	\item if an observation does not appear in the training data, then the frequentist
		approach outputs the secret with the maximum prior probability.
\end{enumerate}
The first two assumptions are very strong, and thus this is just
an approximation of the real trend of such estimate.
However, in practice it approximates well the real trend
\autoref{fig:frequentist-approx}.

Let $A_n(o)$ denote the event ``observation $o$ appears in a training set
of $n$ examples''; because of assumption 1),
$P(A_n(o)) = 1 - \left(1 - \frac{1}{|\objectspace|}\right)^n$.
The conditional Bayes risk estimated with a frequentist approach given
$n$ examples is:
\begin{align*}
r_n(o) = & r_n(o | A_n(o))P(A_n(o)) + r_n(o | \neg A_n(o))P(\neg A_n(o)) = \\
= & \left(1 - \max_{s \in \secretspace} \frac{\hat{\channel}_{s,o}\hat{\priors}(s)}{P(o)}\right) P(A_n(o)) +\\
& + (1 - \max_{s \in \secretspace} \hat{\priors}(s)) P(\neg A_n(o)) \approx \\
\approx & \left(1 - \max_{s \in \secretspace} \frac{\channel_{s,o}\prior{s}}{P(o)}\right) P(A_n(o)) +\\
& + (1 - \max_{s \in \secretspace} \prior{s}) P(\neg A_n(o))
\end{align*}
Assumptions 2) and 3) were used in the last step.
From this expression, we derive the frequentist estimate of \bayesrisk t step
$n$:
\begin{align*}
R^{\mathit{Freq}}_n = \;& \mathbb{E}r_n =\\
= \;& \sum_{o \in \objectspace} P(o) \left(1 - \max_{s \in \secretspace} \frac{\channel_{s,o} \prior{s}}{P(o)}\right) P(A_n(o)) +\\
& + \sum_{o \in \objectspace} P(o) (1 - \max_{s \in \secretspace} \prior{s}) P(\neg A_n(o)) =\\
= \;& P(A_n(o)) \left( \sum_{o \in \objectspace} P(o) - \sum_{o \in \objectspace} \max_{s \in \secretspace} \channel_{s,o}\prior{s}) \right) +\\
& + P(\neg A_n(o)) (1 - \max_{s \in \secretspace} \prior{s}) \sum_{o \in \objectspace} P(o) =\\
= \;& P(A_n(o)) \left(1 - \sum_{o \in \objectspace} \max_{s \in \secretspace} \channel_{s,o}\prior{s})\right) +\\
& + P(\neg A_n(o)) (1 - \max_{s \in \secretspace} \prior{s}) =\\
= \;& P(A_n(o)) \bayesrisk + P(\neg A_n(o)) \errorguesspriors =\\
= \;& \bayesrisk \left(1 - \left(1 - \frac{1}{|\objectspace|}\right)^n\right) +
\errorguesspriors\left(1 - \frac{1}{|\objectspace|}\right)^n \,.
\end{align*}
Note that in the second step we used $P(A_n(o))$ as a constant, which is
allowed by assumption 1).

The expression of $R_n$ indicates that $P(A_n(o))$ weights between
random guessing according to priors-based random guessing and the Bayes risk; when
$P(A_n(o)) \geq P(\neg A_n(o))$, which happens for
$n \geq - \frac{\log{2}}{\log\left(1 - \frac{1}{|\objectspace|}\right)}$
the frequentist approach starts approximating using the
actual Bayes risk (\autoref{fig:frequentist-approx}).

\section{\gowalla details}
We report in \autoref{tab:gowalla} the
real Bayes risk estimated analytically for the \gowalla dataset
defended using the various mechanisms, and their respective
utility.

\begin{table}[ht]
	\centering
	\label{tab:true-bayes-gowalla}
	\caption{True Bayes risk and utility for \gowalla dataset
		defended using various location privacy mechanisms.}
	\renewcommand{\arraystretch}{1.3}
	\begin{tabular}{lccc}
		Mechanism & $\nu$ & \bayesrisk & Utility \\ \toprule
		Blahut-Arimoto        &   2   &   0.760    & 334.611 \\
		                      &   4   &   0.571    & 160.839 \\
		                      &   8   &   0.428    & 96.2724 \\ \hline
		Geometric             &   2   &   0.657    & 288.372 \\
		                      &   4   &   0.456    & 144.233 \\
		                      &   8   &   0.308    & 96.0195 \\ \hline
		Laplacian             &   2   &   0.657    & 288.66  \\
		                      &   4   &   0.456    & 144.232 \\
		                      &   8   &   0.308    & 96.212  \\ \bottomrule
	\end{tabular}
	\label{tab:gowalla}
\end{table}

\section{Application to time side channel}
\label{appendix:exponentiation}

\begin{table}[!t]
	\renewcommand{\arraystretch}{1.3}
	\caption{Number of unique secrets and observations for the time
		side channel to finite field exponentiation.}
	\label{tab:size-finite-field}
	\centering
	\begin{tabular}{ccc}
		Operands' size & $|\secretspace|$ & $|\objectspace|$ \\
		\toprule
		4 bits & $2^4$ & 34\\
		6 bits & $2^6$ & 123\\
		8 bits & $2^8$ & 233\\
		10 bits & $2^{10}$ & 371\\
		12 bits & $2^{12}$ & 541\\
		\bottomrule
	\end{tabular}
\end{table}

We use \bleau to measure the leakage in the running time
of the {\em square-and-multiply} exponentiation
algorithm in the finite field $\mathbb{F}_{2^w}$;
exponentiation in $\mathbb{F}_{2^w}$ is relevant, for example,
for the implementation of the ElGamal cryptosystem.

We consider a hardware-equivalent implementation
of the algorithm computing $m^s$ in $\mathbb{F}_{2^w}$.
We focus our analysis
on the simplified scenario of
a ``one-observation'' adversary,
who makes exactly {\em one} measurement
of the algorithm's execution time $o$, and aims to predict the
corresponding secret key $s$.

A similar analysis was done by Backes and K{\"o}pf~\cite{backes2008formally}
by using a leakage estimation method based on the frequentist approach.
Their analysis also extended to a ``many-observations adversary'',
that is, an adversary
who can make $m$ observations $(o_1, ..., o_m)$, all generated from
the same secret $s$, and has to predict $s$ accordingly.

\subsection{Side channel description}
Square-and-multiply is a fast algorithm for computing
$m^s$ in the finite field $\mathbb{F}_{2^w}$, where $w$ here represents
the bit size of the operands $m$ and $s$.
It works by performing a series of multiplications according
to the binary representation of the exponent $s$,
and its running time is proportional to the number of
1's in $s$.
This fact was noticed by Kocher~\cite{kocher1996timing}, who
suggested side channel attacks to the RSA cryptosystem
based on time measurements.

\subsection{Message blinding}
\balance
We assume the system implements {\em message blinding},
a technique which hides to an adversary the value $m$
for which $m^s$ is computed.
Blinding was suggested as a method for thwarting time side channels~\cite{kocher1996timing},
which works as follows.
Consider, for instance, decryption for the RSA cryptosystem:
$m^d (mod N)$, for some decryption key $d$;
the system first computes $m \cdot r^e$, where $e$
is the encryption key and $r$ is some random value;
then it computes $(mr^e)^d$, and returns the decrypted
message after dividing the result by $r$.

Message blinding has the advantage of hiding information
to an adversary; however, it was shown that it is
not enough for preventing time side channels (e.g.,~\cite{backes2008formally}).

\subsection{Implementation and results}

\begin{figure*}[!bht]
	\centering
	\includegraphics[width=0.75\textwidth]{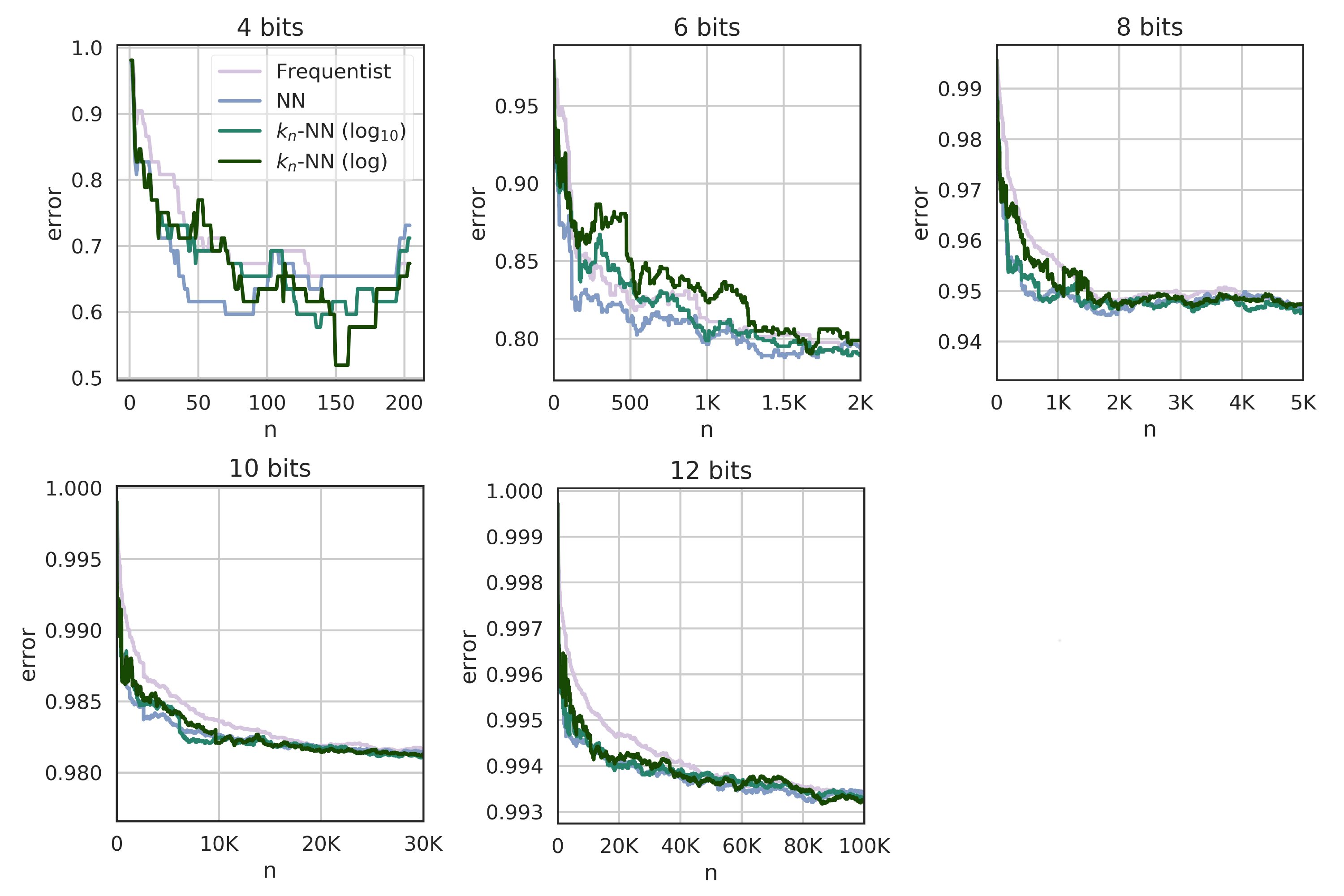}
	\caption{Convergence of the estimates
		for the time side channel attack to the exponentiation algorithm
		as the bit size of the operands increases.}
	\label{fig:time-side-channel}
\end{figure*}

We consider a \texttt{Gezel} implementation of finite field exponentiation.
\texttt{Gezel} is a description language for clocked hardware,
equipped with a simulation environment
whose executions preserve the corresponding circuit's timing information.
This means that the time measurements (i.e., clock cycles) we make
reflect the corresponding circuit implementation~\cite{kopf2006timing}.

We compare the performances of the frequentist and nearest neighbor approaches
in terms of the number of black-box examples required for convergence.
For each bit size $w \in \{4, 6, .., 12\}$, and for all the values
$(m_i, s_i) \in \{0, ..., 2^w-1\}^2$, we run the
exponentiation algorithm to compute $m^s$, and measure its
execution time $o_i$.
As with our application to location privacy (\autoref{sec:gowalla}),
we estimate the
Bayes risk by training a classifier on a set of increasing examples
$n$ and by computing its error on a hold-out set.
We set the size of the hold-out set to
$\min(0.2\cdot2^{2w}, 250\,000)$.

Results in \autoref{fig:time-side-channel}
show that, while for small bit sizes the frequentist approach outperforms
nearest neighbor rules, as $w$ increases, the frequentist approach
requires a much larger number of examples.
Nevertheless, in these experiments we did not notice a substantial
advantage in nearest neighbor rules, even though the output
space is equipped with a notion of metric.
\autoref{tab:size-finite-field} helps interpreting this result:
for larger bit sizes $w$ of the exponentiation operands, the
possible output values (i.e., clock cycles) only increase minimally;
this confirms that, as noticed in our previous experiments, nearest
neighbor and frequentist estimates tend to perform similarly
for systems with small output space.

\end{document}